\documentclass[journal,12pt,onecolumn,draftclsnofoot,]{IEEEtran}
\ifCLASSINFOpdf
\else
\fi
\usepackage[cmex10]{amsmath}
\usepackage{amsfonts}
\usepackage{subfig}
\usepackage[english]{babel}
\usepackage{amssymb,amsmath}
\usepackage{enumerate}

\usepackage{graphicx}
\usepackage{algorithm}
\usepackage{algorithmic}
\usepackage{bbm}
\usepackage{amsmath}
\usepackage{mathrsfs}
\usepackage{breqn}

\usepackage[utf8]{inputenc}
\usepackage[english]{babel}
\usepackage{amsthm}
\usepackage{enumitem}
\usepackage{booktabs}

\newtheorem{Theorem}{Theorem}
\graphicspath{ {Figures/} }

\hyphenation{op-tical net-works semi-conduc-tor}

\begin{document}
%
\title{Joint Cell Muting  and User Scheduling in Multi-Cell Networks with Temporal Fairness}
%
%
%


\author{Shahram~Shahsavari, Nail~Akar, and~Babak~Hossein~Khalaj
\thanks{S. Shahsavari is with the Department
of Electrical Engineering, New York University, New York, USA, e-mail: {\tt shahram.shahsavari@nyu.edu}}
\thanks{N. Akar is with the Electrical and Electronics Engineering Department, Bilkent University, Ankara, Turkey, e-mail: {\tt akar@ee.bilkent.edu.tr}}
\thanks{B. H. Khalaj is with the Electrical Engineering Department, Sharif University of Technology, Tahran, Iran, e-mail: {\tt khalaj@sharif.edu}}}

\maketitle

\begin{abstract}
A semi-centralized joint cell muting and user scheduling scheme for interference coordination in a multi-cell network is proposed under two different temporal fairness criteria. In the proposed scheme, at a decision instant, each base station (BS) in the multi-cell network employs a cell level scheduler to nominate one user for each of its inner and outer sections and their available transmission rates to a central controller which then computes the potential overall transmission rate for each muting pattern. Subsequently, the central controller selects one pattern to un-mute, out of all the available patterns. This decision is shared with all BSs which then forward data to one of the two nominated users provided the pattern they reside in, was chosen for transmission. Both user and pattern selection decisions are made on a temporal-fair basis. Although some pattern sets are easily obtainable from static frequency reuse systems, we propose a general pattern set construction algorithm in this paper. As for the first fairness criterion, all cells are assigned to receive the same temporal share with the ratio between the temporal share of a cell center section and that of the cell edge section is set to a fixed desired value for all cells. The second fairness criterion is based on the max-min temporal fairness for which the temporal share of the network-wide worst-case user is maximized. Extensive numerical results are provided to validate the effectiveness of the proposed schemes as well as to study the impact of choice of the pattern set.
\end{abstract}

\begin{IEEEkeywords}
Frequency reuse, inter-cell interference coordination, cell muting pattern, temporal fairness, max-min fairness.
\end{IEEEkeywords}

%
\IEEEpeerreviewmaketitle

\section{Introduction}
\IEEEPARstart{F}{requency} reuse of the scarce radio spectrum is key to
building high capacity wireless cellular networks \cite{hamza_etal_comsur13,kosta_etal_comsur13}. However, when frequency reuse increases as in a frequency reuse-1 LTE system for which all cells use the same band of frequencies, controlling  the adverse impact of Inter-Cell Interference (ICI) is a challenging task, especially for cell edge users. Mitigation-based Inter-Cell Interference Coordination (ICIC) consists of methods that are employed to reduce ICI through approaches such as interference cancellation and adaptive beamforming \cite{boudreau_etal_commag09}.
Avoidance-based ICIC techniques consist of frequency reuse planning algorithms through which resources are restricted or allocated to users in time and frequency domains whereas power levels are also selected with the aim of increasing SINR and network throughput
\cite{hamza_etal_comsur13,gitlin}. The focus of this paper will be on avoidance-based ICIC.

Avoidance-based ICIC schemes can be static and frequency reuse-based, or dynamic and cell coordination-based \cite{hamza_etal_comsur13}.
In static ICIC, the cell-level resource allocation is fixed and does not change over time.
The most well-known frequency reuse schemes include: (i) conventional frequency reuse schemes, (ii) fractional frequency reuse (FFR), and (iii) soft frequency reuse (SFR).

A conventional frequency reuse system with reuse factor $n>1$ statically partitions the system bandwidth into $n$ sub-bands each of which is allocated to individual cells with the reuse factor  $n$ (typical values being 3, 4, or 7) determining the distance between any two closest interfering cells using the same sub-band \cite{rappaport_book}.
In conventional frequency reuse systems, cell edge users are penalized due to poor channel conditions and ICI. In order to alleviate this situation, FFR partitions the system bandwidth into two groups, one for cell center (also referred to as inner or interior) users and the other for cell edge (outer or exterior) users.
The association of a given user to a cell center or cell edge section is made on the basis of its distance from the serving BS or SINR measurements
\cite{aboulhassan_etal_ntms15}. The frequency reuse parameter $n$ is set to unity for inner users whereas a  strictly larger reuse factor (typically three) is used for outer users  \cite{boudreau_etal_commag09}. It is shown in \cite{optimumFFR} that the optimal frequency reuse factor for outer users is 3. Despite increased SINR for outer users as shown in \cite{boudreau_etal_commag09},  FFR has the apparent disadvantage that a sub-band of the outer group is left unused in cell center sections. On the other hand, SFR performs per-section power allocation by assigning a lower (higher) transmission power to inner (outer) users, therefore, making it possible to use the whole frequency band for the cell center section. SFR is shown to be superior to FFR in achieving higher spectral efficiency \cite{novlan_etal_tc12,FFRSFR}.Moreover, using power optimization can improve the performance of SFR as shown in \cite{ETT:SFR}.
For other variations of static frequency reuse-based ICIC schemes, we refer to the recent survey papers on ICIC specific to OFDM-based LTE networks
\cite{hamza_etal_comsur13,kosta_etal_comsur13,aboulhassan_etal_ntms15}.

In realistic cellular wireless networks, the traffic demand is spatially inhomogeneous and changes over time. At one time, users may be concentrated in a given cluster of cells while at other times, user concentration may move to another cluster. Moreover, user distributions in the cell center and the cell edge sections may also change in time in an unpredictable way throughout the cellular network. Therefore, static frequency reuse-based ICIC schemes fall short in coping with dynamic
workloads in time and space and methods are proposed for dynamic workloads in \cite{gitlin,gonzales_etal_2010,li_etal_tett14}. Dynamic ICIC (D-ICIC), on the other hand, relies on cell coordination-based methods that dynamically react to changes in traffic demands and user distributions \cite{ETTinterference}. Despite the apparent theoretical advantage of D-ICIC network throughput, most proposed schemes in this category are very complex to implement. D-ICIC schemes are categorized into centralized, semi-centralized, and de-centralized, on the basis of how cell-level coordination is achieved and subsequently the complexity of implementation of the underlying scheme \cite{gonzales_etal_vtc10}.
In centralized D-ICIC, the channel state information of each user is fed to a centralized entity which then makes scheduling decisions  to maximize the throughput under fairness and power constraints \cite{bohge_etal_ewc09}.  However, such centralized scheduling is complex to implement due to the requirement of timely and large per-user feedback information as well as the complexity of the centralized scheduler \cite{corvino_etal_wcnc09}. Semi-centralized schemes use centralized entities which only perform cell-level coordination while user-level allocation is performed by each BS \cite{li_liu_vtc03,rahman_yanikomeroglu_twc10}.
The reference \cite{li_liu_twc06} considers a semi-centralized radio resource allocation scheme in
OFDMA networks where radio resource allocation is performed at two layers. At the upper layer,
a centralized algorithm
coordinates ICI between BSs at the super-frame level and each
BS makes its scheduling decisions opportunistically based on instantaneous channel conditions of users.
The computational complexity of semi-centralized schemes are much less than centralized schemes with reduced feedback requirement; however, a sufficiently low-delay infrastructure is still needed.  In de-centralized D-ICIC, there is no centralized entity but a local signaling exchange is still needed among BSs \cite{gonzales_etal_vtc10}. The focus of this paper is the multi-cell scheduling problem with semi-centralized D-ICIC which can be implemented provided a low-delay and efficient backhaul exists.

In non-dense frequency reuse-$n$ networks with $n>1$, a single-cell scheduler
decides which user to schedule without a need for coordination among cells.
In this paper, we focus on opportunistic scheduling with fairness constraints. In opportunistic scheduling, the scheduler tries
to select a user having the best channel condition
at a given time \cite{kulkarni_rosenberg_wirnet05}. Such a greedy opportunistic scheduler would maximize the throughput however fairness among users would not be achieved.
Practical opportunistic schedulers exploit the time-varying nature of the wireless channels for maximizing cell throughput under certain fairness constraints.
In Proportional Fair (PF) scheduling, at a scheduling instant, the BS chooses to serve the user which has the largest ratio of available transmission rate to its exponentially smoothed average throughput \cite{jalali,tsePF}.
Different variations of the PF algorithm are possible depending on how the scheduler treats empty or short queues and how the average throughput is maintained \cite{andrews}. In Temporal Fair (TF) single-cell opportunistic scheduling, the cell throughput is maximized under the constraint that users receive the same temporal share, i.e., the same average air-time \cite{shroff_jsac01}. The work in
\cite{shroff_jsac01} shows that the optimum single-cell TF scheduler chooses the user which has the largest sum of available transmission rate and another user-dependent term when an appropriate channel model is available, or alternatively, this additional term can be obtained using an on-line learning algorithm.
Under some simplifying assumptions involving channel characteristics of users, the PF and TF methods are shown to be equivalent \cite{tsePF,holtzman_waterfilling}.
We refer the reader to \cite{survey_downlink} for a survey on single-cell scheduling in LTE networks.  There have been few studies to generalize
single-cell fairness to multi-cell or network-wide fairness in multi-cell networks.
Reference \cite{cho_etal_twc09} shows via simulations that network-wide opportunistic scheduling and power control is effective
for fairness-oriented networks. The authors in \cite{shahsavari_akar_wcl15} propose a semi-centralized approach to achieve inter-cell and intra-cell temporal fairness in multi-cell networks but user-level network-wide fairness is not studied in that work.

In this paper, we propose a semi-centralized joint cell muting and user scheduling scheme for interference coordination in the downlink of a multi-cell network under two different temporal fairness criteria. In this scheme, the network operator is given a set of cell muting patterns each of which is associated with a set of cells that can transmit simultaneously with acceptable ICI in the multi-cell network. The scheduler operates in two levels (upper and lower levels) as in  \cite{li_liu_twc06} as follows.
At a scheduling instant, each BS employs a cell level (lower level) TF scheduler to nominate one user for each of its inner and outer sections and their available transmission rates to a central entity which then computes the potential overall transmission rate for each muting pattern. Subsequently, a TF scheduler is run at the central entity (upper level) to decide which pattern to activate. This decision is then shared with all BSs which then forward data to one of the two nominated users provided the pattern they reside in, was chosen for transmission.

The upper and lower level TF schedulers can be tuned to conform to two different temporal fairness criteria. As for the first fairness criterion, all cells receive the same temporal share with the ratio between the temporal share of a cell center section and that of the cell edge section
is set to a fixed desired value for all cells. Within a section, all users receive the same temporal share. Although the first fairness criterion achieves fairness among cells and also users within the same section, this criterion does not seek network-wide user fairness. As a remedy, we propose to use network-wide max-min temporal fairness as the second fairness criterion.
A broad range of centralized and/or distributed algorithms are available in the literature to implement max-min fairness in the context of sharing resources including link bandwidth \cite{parekh_gallager_ton93}, network bandwidth \cite{bertsekas_gallager_book}, CPU \cite{stoica_etal_rtss96}, and cloud computing \cite{ghodsi}. The high popularity of the notion of max-min fairness in general computing and communication systems has led us to study in this paper the network-wide max-min temporal fairness for which the temporal share of the network-wide worst-case user in the multi-cell network is maximized. To the best of our knowledge, network-wide max-min fairness in this context has not been studied before.
Both fairness criteria are shown to be handled within the framework we propose in this paper.
As a further contribution, we propose a novel general pattern set construction algorithm  with reasonable computational complexity
using fractional frequency reuse principles with $M$ cell muting patterns.
The complexity of the proposed upper level TF scheduler turns out to be the same as that of a single-cell TF scheduler with $M$ users and is therefore quite efficient for relatively small cardinality parameter $M$.
The impact of choice of the cell muting pattern set and its cardinality $M$ is also studied through numerical examples for various cellular topologies.

The proposed approach leads to reduced computational complexity of the upper level scheduler and reduced information exchange requirements between BSs and the central entity in comparison with centralized schemes that have higher implementation complexities \cite{bohge_etal_ewc09}. Although most of the literature on the interference coordination techniques are based on OFDMA-based  LTE networks, we consider a time-slotted single-carrier air interface for the sake of simplicity.
However, it is possible to extend the proposed framework to OFDMA-based networks.
As an example, the case where all resource blocks (RB) in an LTE frame are assigned by the BS to one single user only, can be obtained through an immediate extension.
However, further generalizations including service to multiple users within a single frame are left out of the scope of this study.

The paper is organized as follows. We present the proposed multi-cell architecture along with the descriptions of cell muting patterns and pattern set construction algorithms in Section \ref{section2}.
The two forms of fairness criteria that we employ as well as the two-level scheduler proposed to satisfy both criteria are presented in Section \ref{section3}. We validate the effectiveness of the proposed
approach in Section \ref{section4}. Finally, we conclude.

\section{Proposed Multi-Cellular Architecture}
\label{section2}
\subsection{Cells and Users}
We consider the downlink of a time-slotted single-carrier frequency reuse-1 un-sectored Cellular Network (CN) with bandwidth $BW$ where the time slots are indexed by $1\leq \tau <\infty$.
We assume that each cell is divided into inner  and outer sections;  see Fig.~\ref{cell-sections} for a cell with cell radius $R_C$ and inner section radius $R_I$. Let $C_i$, $i=1,2,\ldots,K$, denote cell $i$ in the CN where $K$ is the total number of cells.  Let $BS_i$  denote the base station located in $C_i$. Also let $I_i$ and $O_i$ denote the inner and outer sections of $C_i$, respectively.  We let $N$ denote the total number of users in the network and let $N_i$, $N^I_i$ and $N^O_i$ denote the number of users associated with cell $C_i$, and with sections $I_i$ and $O_i$, respectively.  The cell or section association of a given user is assumed to be a-priori known throughout this paper.
Obviously, $N_i=N^I_i+N^O_i$ and $N = \sum_i N_i$. Let $U^I_{i,j}$, $j = 1,2,\ldots,N^I_i$ and $U^O_{i,j}$, $j = 1,2,\ldots,N^O_i$ denote the user $j$ associated
with $I_i$ and $O_i$, respectively. We assume all users are persistent, i.e., they always have data to receive.  For a given time slot $\tau$, the cell $C_i$ is active (un-muted) if its $BS_i$ is transmitting. Otherwise, $C_i$ is said to be muted. When $C_i$ is active and $BS_i$ is transmitting to a user in $I_i$, then $I_i$ is called active; otherwise  $O_i$ is active.
\subsection{Cell Muting Patterns}
A  cell muting or transmission pattern (or pattern in short) is defined as a subset of the set of all sections in the CN satisfying the following two properties:
\begin{itemize}
\item \noindent Patterns are noise-limited as opposed to being interference-limited, i.e., the elements of a pattern can be activated simultaneously without the associated base stations creating destructive interference on users associated with other active cells. Obviously, when a pattern is selected by the scheduler, all cells which do not have any inner or outer sections in that chosen pattern, would be muted.
\item  Patterns are maximal, i.e., a pattern can not be included in another pattern with larger cardinality.
\end{itemize}
Obviously, a pattern is governed by the geometry of the CN, the power levels used by a BS for transmitting to inner and outer section users,  and the definition of destructive interference.

In this paper, patterns are constructed on the basis of an underlying  Fractional Frequency Reuse-$n$ (FFR-$n$)-type CN; see \cite{bilios,novlan_etal_globecom10}.
As an example, Fig.~\ref{patternFFR3} illustrates four patterns in a 9-cell network inspired by an FFR-3 system \cite{bilios}.
In conventional FFR-3 systems, a separate frequency band is assigned to each of the four patterns which are activated simultaneously in time but in different frequency bands. In our proposed architecture, the entire bandwidth $BW$ of the frequency reuse-1 system is dynamically shared by the available patterns in time and not in frequency, by means of dynamically muting all but one pattern at a given time slot.  It is clear that deployment of a more general FFR-$n$ system with $n=x^2 + xy + y^2$ for some non-negative integers $x$ and $y$, can provide $M=n+1$ patterns for $x,y\geq 1$ \cite{rappaport_book}. We call this set of $n+1$ patterns an Essential Pattern Set (EPS-$n$) for the associated FFR-$n$ system.
On the other hand, a General Pattern Set (GPS) is an arbitrary collection of patterns in which each section in the CN is an element of at least one pattern.
It is clear that EPS-$n$ is a GPS. The Universal Pattern Set (UPS) is the set of all possible patterns.
Obviously, GPSs are subsets of the UPS.
A pattern is said to be active at a given slot if all the sections included in the pattern are active. A section may be included in multiple patterns for a GPS. However, for EPS-$n$, the patterns are mutually exclusive.  Fig.~\ref{GPS-eg} illustrates a sample GPS for the 9-cell CN; note that $I_9$ is an element of the three patterns $P_1$, $P_3$, and $P_4$ for this sample GPS.
\begin{figure}[th]
\centering
\includegraphics[width=4cm]{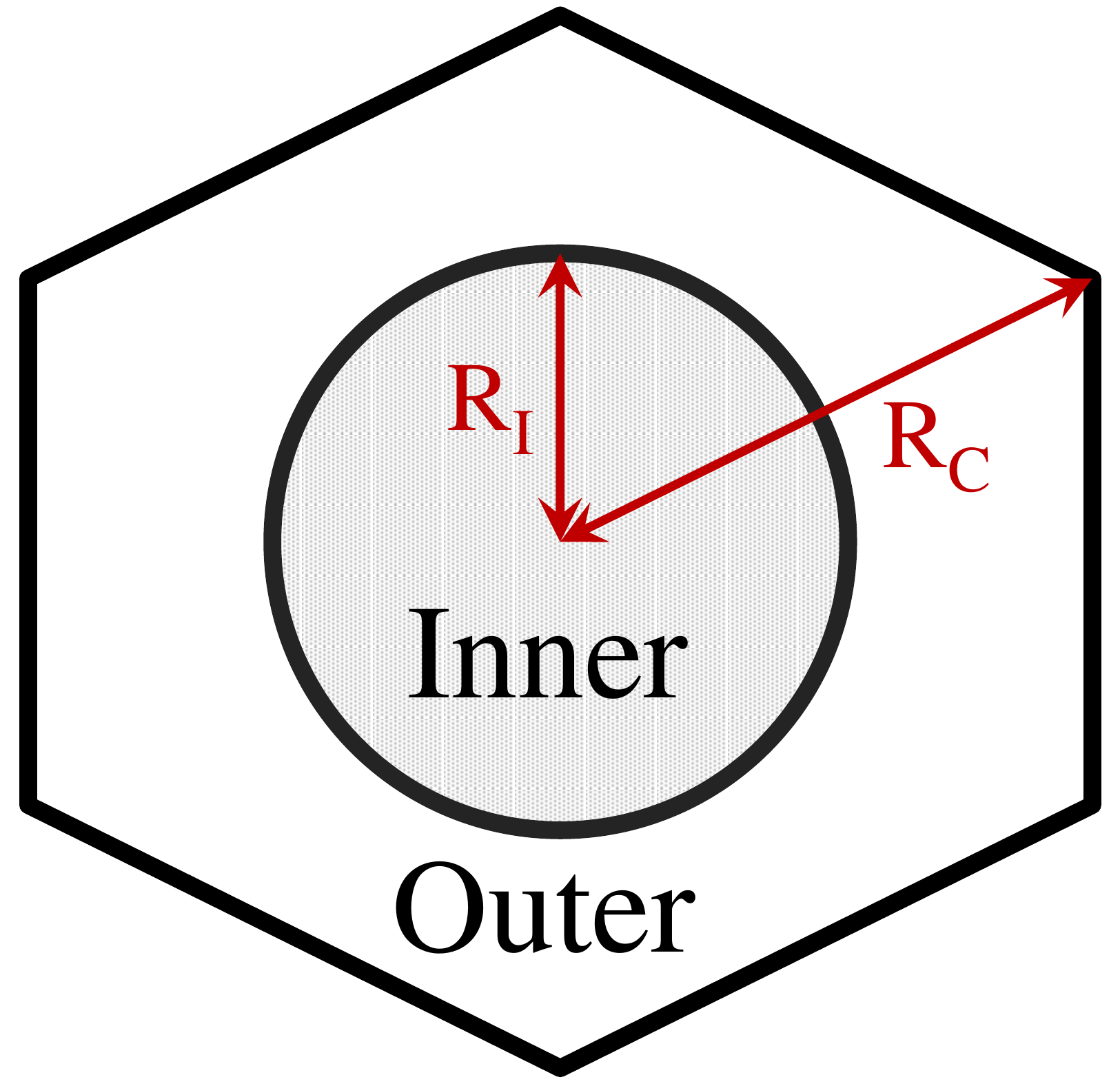} 
\caption{Inner and outer sections of a cell}
\label{cell-sections}
\end{figure}

\begin{figure}[th]
\centering
\includegraphics[width=0.7\linewidth]{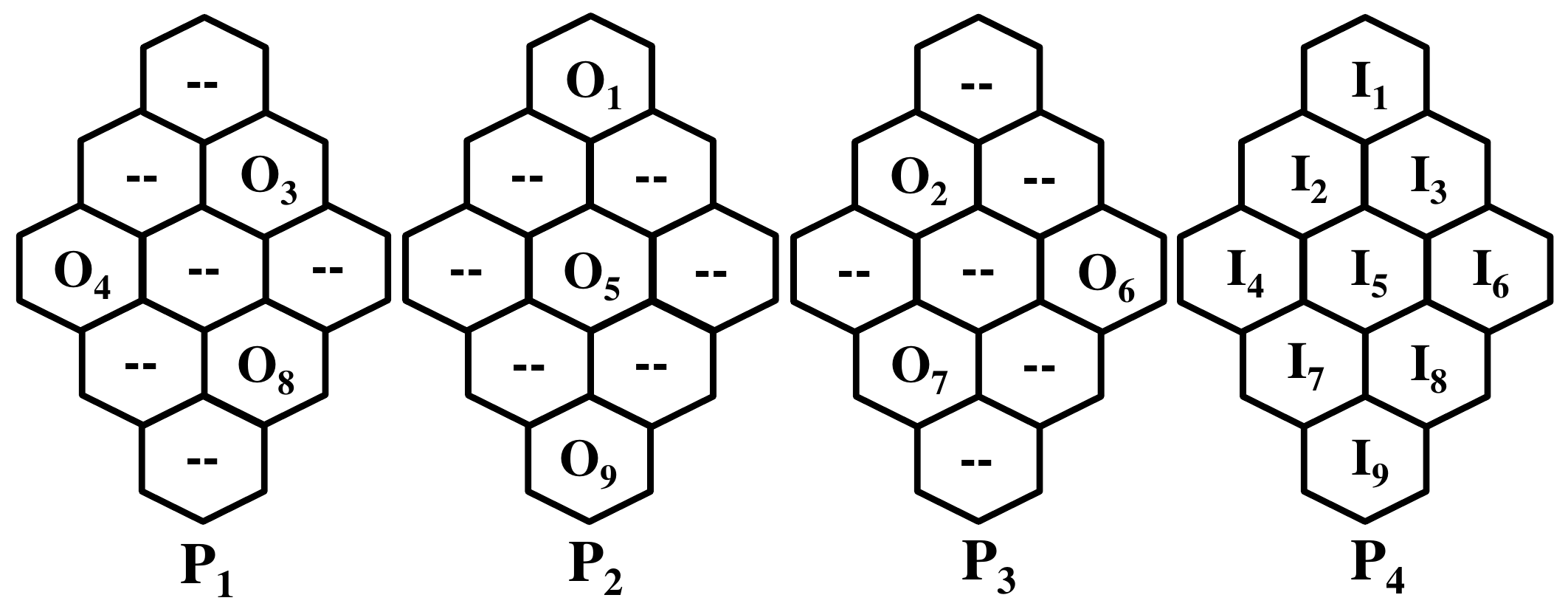}
\caption{EPS-3 for  the 9-cell CN:  $ P_1=\{O_3,O_4,O_8\}, P_2=\{O_1,O_5,O_9\}, P_3=\{O_2,O_6,O_7\}, P_4=\{I_1, I_2, \ldots, I_9\}.$}
\label{patternFFR3}
\end{figure}

\begin{figure}[th]
\centering
\includegraphics[width=0.55\linewidth]{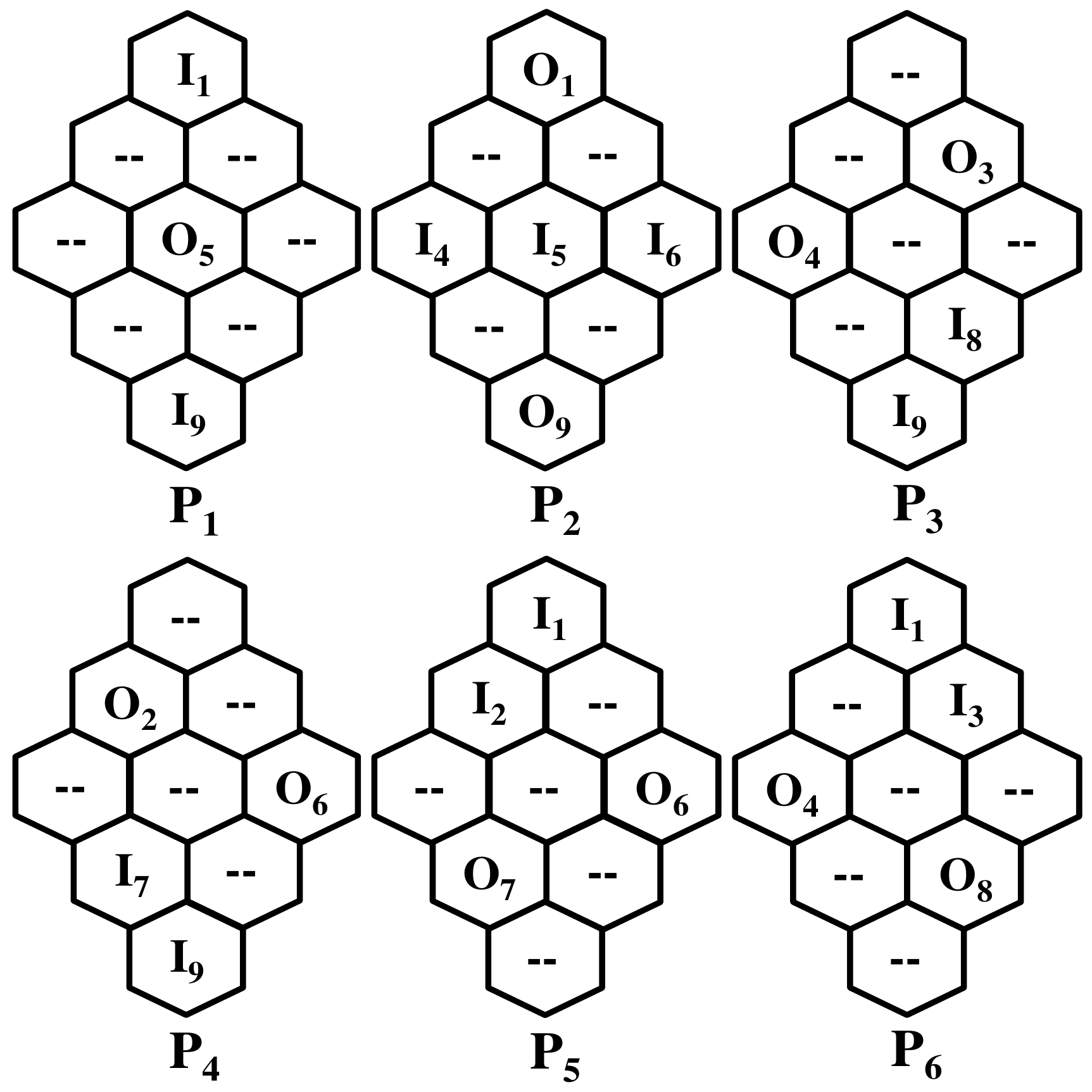}
\caption{A sample GPS with 6 patterns for the 9-cell CN: $P_1=\{I_1,I_9,O_5\}$, $P_2=\{I_4,I_5,I_6,O_1,O_9\}$, $P_3=\{I_8,I_9,O_3,O_4\}$, $P_4=\{I_7,I_9,O_2,O_6\}$, $P_5=\{I_1,I_2,O_6,O_7\}$, $P_6=\{I_1,I_3,O_4,O_8\}$.}
\label{GPS-eg}
\end{figure}
\subsection{General Pattern Set Construction Mechanism}
\label{subsec:pattern set construction}
In this section, we propose an algorithmic method to construct a GPS which can effectively be used in multi-cell networks.
For this purpose, let the operator $\overset{n}{\leftrightarrow}$ represent the interference relationship in an FFR-$n$ system with $X \overset{n}{\leftrightarrow} Y$ implying the activation of sections $X$ and $Y$ causing destructive interference of $X$ on $Y$ and vice versa.  In the lack of destructive interference between the two associated cells, we say $X \overset{n}{\not\leftrightarrow} Y$. We assume that the interference is not destructive if its power is much less than the noise power.
The interference relationship which holds for any two given sections $X$ and $Y$ depends on the following two parameters:
\begin{enumerate}[label=\roman*),leftmargin=*]
\item The physical distance between the BSs to whom sections $X$ and $Y$ are associated with,
\item The transmission power levels assigned to inner and outer section users.
\end{enumerate}
In order to quantify interference between two sections, we make use of the essential patterns in an FFR-$n$ system. Among the patterns in ESP-$n$, except one transmission pattern which consists of all inner sections in the network (for example $P_4$ in Fig. \ref{patternFFR3}), there are $n$ patterns each of which consists of a number of outer sections in the network. We call all such $n$ patterns as a Mother Pattern Set (MPS-$n$). For instance, MPS-3 denotes the set
$\{P_1,P_2,P_3\}$; see Fig.~\ref{patternFFR3}.
Let $D_n$ denote the distance between the centers of the nearest active cells in the patterns of MPS-$n$. Based on \cite{rappaport_book}, we have $D_n=R_C\sqrt{3n}$. Let us denote by $d_{i,j}$ the distance between the centers of $C_i$ and $C_j$. Consequently, we have the following interference identities:
\begin{eqnarray}
O_i  \overset{n}{\leftrightarrow} O_j~~& \text{iff} & ~~ d_{i,j}<D_n \label{O-Oneighborhood},\\
I_i \overset{n}{\leftrightarrow} O_j~~& \text{iff} & ~~ d_{i,j}<D_n \label{I-Oneighborhood}, \\
I_i \overset{n}{\not\leftrightarrow} I_j & ~\text{for all}&   i,j \label{I-Ineighborhood}.
\end{eqnarray}
Moreover, we consider different fixed downlink power levels for inner and outer section users, designed in a way that (\ref{O-Oneighborhood})-(\ref{I-Ineighborhood}) hold. We note that $\forall i:~I_i \overset{n}{\leftrightarrow} O_i$ based on (\ref{I-Oneighborhood}) meaning that at most one section per cell can be active at a time slot. The following theorem (given without a proof since it is relatively straightforward) provides the necessary and sufficient conditions for a set of sections to constitute a pattern in an associated FFR-$n$ CN with $K$ cells.
\begin{Theorem}
A set of sections $P\neq\varnothing$ amounts to a pattern if and only if the following conditions hold for any $i,j \in\{1,2,\ldots, K\}$:
\begin{enumerate}[label=\alph*),leftmargin=*]
\item $O_i \overset{n}{\leftrightarrow} O_j, O_i \in P \Rightarrow O_j \not\in P$,
\item $O_i \overset{n}{\leftrightarrow} I_j, O_i \in P \Rightarrow I_j \not\in P$,
\item $I_i \overset{n}{\leftrightarrow} O_j, I_i \in P \Rightarrow O_j \not\in P$,
\item $\{I_i \in P \} \vee \{\exists k:O_k \overset{n}{\leftrightarrow} I_i, O_k \in P \}$,
\item $O_i \in P, O_i \overset{n}{\not\leftrightarrow} I_j \Rightarrow \{I_j \in P\} \vee \{\exists k: O_k \overset{n}{\leftrightarrow} I_j, O_k \in P\} $
\item  $I_i \in P, I_i \overset{n}{\not\leftrightarrow} O_j \Rightarrow \{O_j \in P\} \vee \{\exists k: O_k \overset{n}{\leftrightarrow} O_j, O_k \in P \} \vee \{\exists k: I_k \overset{n}{\leftrightarrow} O_j, I_k \in P \}$
\item $O_i \in P, O_i \overset{n}{\not\leftrightarrow} O_j \Rightarrow \{O_j \in P\} \vee \{\exists k: O_k \overset{n}{\leftrightarrow} O_j, O_k \in P\} \vee \{\exists k: I_k \overset{n}{\leftrightarrow} O_j, I_k \in P \}$
\end{enumerate}
\label{suff-nece}
\end{Theorem}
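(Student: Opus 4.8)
The plan is to unpack the two defining properties of a \emph{pattern}---that the sections of $P$ are pairwise non-interfering (the \emph{noise-limited} property) and that $P$ is maximal (and since every non-interfering set of sections extends to a pattern, ``maximal'' here is equivalent to: no non-interfering set properly contains $P$)---and to match each half against the seven conditions. The structural fact that carries essentially the whole proof is (\ref{I-Ineighborhood}): two inner sections never interfere, so every interfering pair of \emph{distinct} sections has the form $O_i$--$O_j$ or $O_i$--$I_j$. Consequently a section $S$ outside $P$ is \emph{blocked}, i.e.\ cannot be added to $P$ without creating interference, exactly in these cases: $S=I_i$ and some $O_k\in P$ has $O_k\overset{n}{\leftrightarrow}I_i$; or $S=O_i$ and either some $O_k\in P$ has $O_k\overset{n}{\leftrightarrow}O_i$ or some $I_k\in P$ has $I_k\overset{n}{\leftrightarrow}O_i$---precisely the alternatives that appear in the conditions through (\ref{O-Oneighborhood})--(\ref{I-Oneighborhood}).

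In the only-if direction I would start from $P\neq\varnothing$ being a pattern. Reading off non-interference for the two admissible types of interfering pair gives a) and b) at once, and c) is logically equivalent to b) (by the symmetry of $\overset{n}{\leftrightarrow}$, after interchanging $i$ and $j$). Maximality says every section outside $P$ is blocked; writing this out for $S=I_i$ gives d), hence a fortiori e) (whose conclusion is the $I_j$-instance of d)), and writing it out for $S=O_j$ gives exactly the three-way disjunction that is the conclusion of both f) and g). So a)--g) hold, and one sees that the hypotheses of e), f), g) played no role, which already signals that the list is partly redundant.

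The if direction is where the real content lies. Assuming a)--g) and $P\neq\varnothing$: conditions a) and b) forbid every interfering pair of distinct sections except the inner--inner type, which (\ref{I-Ineighborhood}) excludes anyway, so $P$ is non-interfering. For maximality I would treat the two section types in turn. If $I_i\notin P$, then the first disjunct of d) is false, so some $O_k\in P$ has $O_k\overset{n}{\leftrightarrow}I_i$ and $I_i$ is blocked. The outer case is the crux: let $O_j\notin P$ and suppose, toward a contradiction, that no member of $P$ interferes with $O_j$; then every section of $P$---whatever its type---is non-interfering with $O_j$. Since $P\neq\varnothing$, fix some $S\in P$. If $S=I_i$, then f) applies, its hypotheses $I_i\in P$ and $I_i\overset{n}{\not\leftrightarrow}O_j$ being satisfied; if $S=O_i$, then g) applies in the same way. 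In either case the conclusion forces $O_j\in P$ or the existence of a member of $P$ interfering with $O_j$, both of which we have excluded. Hence $O_j$ is blocked, no section can be added to $P$, so $P$ is maximal and therefore a pattern.

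The one genuinely non-mechanical step is this last one. Maximality would naturally be captured by the \emph{unconditional} statement ``every $O_j\notin P$ is blocked,'' whereas the theorem supplies only the \emph{conditional} forms f) and g); the gap is bridged entirely by the standing hypothesis $P\neq\varnothing$, whose single witness $S$ activates f) or g) according to whether it is an inner or an outer section. The rest is a finite case check against (\ref{O-Oneighborhood})--(\ref{I-Oneighborhood}), in the course of which one also notices the built-in redundancy of the list---c) is equivalent to b), and e) is subsumed by d)---which I would flag explicitly so that a)--g) is not read as a minimal axiomatization.
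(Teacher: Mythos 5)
Your proof is correct; note that the paper itself supplies no argument here (Theorem~\ref{suff-nece} is explicitly ``given without a proof since it is relatively straightforward''), so there is nothing to compare against. Your write-up is the natural unpacking of noise-limitedness and maximality via (\ref{O-Oneighborhood})--(\ref{I-Ineighborhood}), and it correctly isolates the only non-mechanical point --- using $P\neq\varnothing$ to instantiate f) or g) and so recover the unconditional ``every absent outer section is blocked'' --- as well as the redundancy of c) and e).
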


In Theorem \ref{suff-nece}, the first three conditions are required for a pattern to be noise-limited
whereas the remaining conditions are required for a pattern to be maximal. One straightforward way of constructing all transmission patterns (UPS) is an Exhaustive Search (ES) among all possible sets of sections of cardinality $3^K$ since each cell's inner section or outer section is included in a given set or not, leading to three possibilities for each cell and there are $K$ such cells.  In the ES method, all of these sets are generated first, then using Theorem \ref{suff-nece}, one can check whether each set of sections is a pattern or not in a loop. Consequently, in order to check the validity of all the generated sets of sections, the ES method requires $\mathcal{O}(3^K)$ iterations which is not feasible to run for large networks.

In this paper, we propose a novel algorithm with reduced computational complexity to construct a GPS  (using MPS-$n$) which is a subset of the UPS.
 For this purpose, we let $\{P^*_m\}_{m=1}^n$ be the patterns of MPS-$n$. Also, let $S^m_k, k=1,2,\ldots, 2^{\left\vert{P^*_m}\right\vert}$ denote the subset $k$ of the set of sections included in $P^*_m$ where  ${\left\vert{P^*_m}\right\vert}$ denotes the number of sections included in $P^*_m$.
For convenience, let $S^m_1=\emptyset$ for each $m$. Algorithm~\ref{pattern-generation-alg} proposes an algorithmic method to generate a GPS using MPS-$n$ for an arbitrary value of $n$.
 Fig.~\ref{pattern-generation-eg} illustrates the working principle of Algorithm~\ref{pattern-generation-alg} for $n=3$.
In Fig.~\ref{pattern-generation-eg}a, one of the mother patterns in the FFR-3 system is depicted.
 Fig.~\ref{pattern-generation-eg}b depicts one of the subsets of that mother pattern. Due to interference, neither the inner nor the outer sections of the shadowed cells can be a member of the new pattern $V$  in Fig.~\ref{pattern-generation-eg}c based on (\ref{O-Oneighborhood}) and (\ref{I-Oneighborhood}). In Fig.~\ref{pattern-generation-eg}d, inner sections of the remaining cells are added to the pattern $V$ to attain maximality based on (\ref{I-Ineighborhood}). It is clear that EPS-$n$ is a subset of the pattern set constructed by this algorithm. Next, we present
Theorem~\ref{alg1-gps} along with its proof showing that the pattern set produced by Algorithm ~\ref{pattern-generation-alg} is indeed a GPS.

\begin{algorithm}
\caption{Constructing a GPS using MPS-n}
 \label{pattern-generation-alg}
\textbf{Input}: MPS-$n$\\
\textbf{Output}: GPS
\begin{algorithmic}[1]
\STATE  GPS $\leftarrow \varnothing$\;
\FOR {$m=1$ to $n$ step $1$}
        \FOR {$k=1$ to $2^{\left\vert{P_m^*}\right\vert}$ }
            \STATE $V \leftarrow S^m_k$\;
            \STATE Add $I_i,~i=1,2,\ldots,K$ to $V$ unless \\$\exists j:\{O_j \in V\} \wedge \{I_i \overset{n}{\leftrightarrow} O_j$\}\;
            \IF {$V \not\in$ GPS}
                \STATE Add $V$ to the GPS;
            \ENDIF
        \ENDFOR
\ENDFOR
\end{algorithmic}
\end{algorithm}

\begin{figure}[h]
\centering
\includegraphics[width=0.75\linewidth]{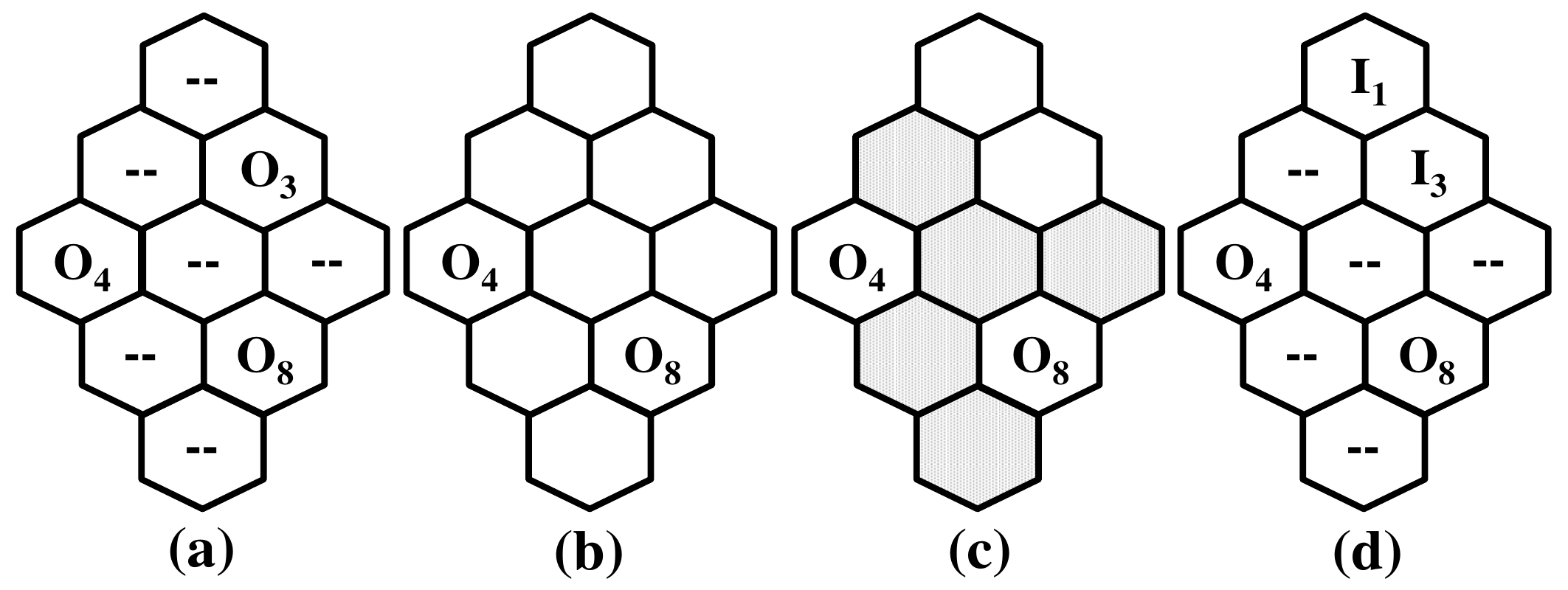}
\caption{Steps to construct a transmission pattern from a mother pattern in an FFR-3 system.
}
\label{pattern-generation-eg}
\end{figure}

\begin{Theorem}
The pattern set produced by Algorithm~\ref{pattern-generation-alg} is a GPS.
\label{alg1-gps}
\end{Theorem}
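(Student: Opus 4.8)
The plan is to read the definition of a GPS as two separate obligations and discharge each one. Recall that a pattern must be nonempty, noise-limited, and maximal (equivalently, it must satisfy conditions (a)--(g) of Theorem~\ref{suff-nece}), and that a GPS is a collection of patterns in which every section of the CN appears at least once. So I must show: (i) every set $V$ that Algorithm~\ref{pattern-generation-alg} appends to its output is a pattern; and (ii) every inner section $I_i$ and every outer section $O_i$ lies in at least one such $V$.

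For (i) I would fix an iteration with indices $m,k$. Because line~5 checks interference only against the outer sections of $V$, and the outer part of $V$ is frozen to $S^m_k$ by line~4 and never modified afterwards, the constructed set is exactly $V = S^m_k \cup \{\, I_i : I_i \overset{n}{\not\leftrightarrow} O_j \text{ for all } O_j \in S^m_k \,\}$. Noise-limitedness then follows by inspecting the three types of section pairs in $V$: two outer sections both belong to $S^m_k \subseteq P^*_m$ and $P^*_m$ is itself a pattern (it is a member of MPS-$n$); an outer--inner pair cannot interfere, by the membership test of line~5; and an inner--inner pair never interferes, by~(\ref{I-Ineighborhood}). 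Nonemptiness is clear, since $V \supseteq S^m_k$ when $S^m_k \neq \varnothing$ and $V = \{I_1,\dots,I_K\}$ when $S^m_k = \varnothing$. For maximality I would argue the contrapositive: appending any section $X \notin V$ breaks noise-limitedness, which is enough, since a pattern strictly containing $V$ would have to contain some such $X$ together with all of $V$. If $X = I_i$, then $I_i$ failed line~5's test, so $I_i \overset{n}{\leftrightarrow} O_j$ for some $O_j \in S^m_k \subseteq V$. If $X = O_i$ and $I_i \in V$, then $O_i \overset{n}{\leftrightarrow} I_i$, because $d_{i,i} = 0 < D_n$ and~(\ref{I-Oneighborhood}). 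If $X = O_i$ and $I_i \notin V$, then $I_i$ failed line~5's test, so $d_{i,j} < D_n$ for some $O_j \in S^m_k$, whence $O_i \overset{n}{\leftrightarrow} O_j$ by~(\ref{O-Oneighborhood}); in every case $V \cup \{X\}$ contains an interfering pair.

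For (ii), all inner sections are covered by the single pass with $k=1$: there $S^m_1 = \varnothing$, so line~5 adds every $I_i$ and the all-inner pattern $\{I_1,\dots,I_K\}$ is produced. For an outer section $O_i$, I would use that EPS-$n$ is a GPS whose members are mutually exclusive, so $O_i$ lies in some member of EPS-$n$; that member contains an outer section, hence it is not the all-inner pattern and must be one of the mother patterns $P^*_m$. Then $\{O_i\}$ is one of the $2^{|P^*_m|}$ subsets of $P^*_m$ enumerated in line~3, say $S^m_k$, and the pass on $(m,k)$ builds a set $V \supseteq \{O_i\}$; whether $V$ is inserted afresh or is already present, $O_i$ ends up in some member of the output. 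Combining (i) and (ii) gives that the output is a collection of patterns covering every section, i.e., a GPS.

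The one step that takes an idea rather than bookkeeping is the $X = O_i$ case of maximality: it is not obvious a priori that $O_i$ cannot be re-added just because its companion $I_i$ was blocked. The resolution is that the identities~(\ref{O-Oneighborhood}) and~(\ref{I-Oneighborhood}) share the same threshold $d_{i,j} < D_n$, so whatever outer section blocked $I_i$ also interferes with $O_i$. Everything else is a direct check against the interference identities~(\ref{O-Oneighborhood})--(\ref{I-Ineighborhood}) and Theorem~\ref{suff-nece}.
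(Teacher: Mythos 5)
Your proposal is correct and follows essentially the same route as the paper's proof: show each emitted $V$ is noise-limited and maximal, handle the ``cannot add an outer section'' case by observing that the blocked companion inner section and the outer section are governed by the same distance threshold $d_{i,j}<D_n$ in (\ref{O-Oneighborhood}) and (\ref{I-Oneighborhood}), and obtain coverage from the empty subset (all inner sections) and the subsets of the mother patterns (all outer sections). The only difference is one of explicitness --- you spell out the case analysis and the shared-threshold observation that the paper leaves implicit --- so no further comment is needed.
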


\renewcommand{\qedsymbol}{\rule{0.5em}{0.5em}}
\begin{proof}
We need to show that the elements of the pattern set produced by Algorithm~\ref{pattern-generation-alg} are noise-limited and maximal and moreover each section in the CN is an element of at least one pattern in the pattern set. It is clear that all the patterns are noise-limited due to the way a new pattern, $V$, is generated by Algorithm~\ref{pattern-generation-alg}.
On the other hand, assume at least one of the generated patterns is not maximal meaning that we can at least add one other section (say section $X$) to it, while it remains noise-limited. Clearly, section $X$ cannot be an inner section, because Algorithm~\ref{pattern-generation-alg} adds all possible inner sections. On the other hand, $X$ can not be an outer section because assume that $X$ is the outer section of cell $Y$ with the inner section $Z$. If adding $X$ would not cause destructive interference, then adding $Z$ also would not cause destructive interference. However, this contradicts with the way the algorithm works. To show that each section is an element of at least one pattern, recall that patterns which belong to MPS-$n$ are among the generated patterns and each of the outer sections is an element of an individual MPS. Moreover, the set of all inner sections is obtained  as a pattern when we start from the subset $S^m_1$ for any $m$. This concludes the proof.
\end{proof}

We now elaborate on the computational aspects of the GPS construction algorithm that we have proposed. The outer loop of Algorithm~\ref{pattern-generation-alg} requires
$n$ iterations and the inner loop requires approximately $2^{\lceil{K/n}\rceil}$ iterations. To see this, there are $n$ mother patterns, namely $\{P^*_m\}_{m=1}^n$, and each mother pattern $P_m^*$ consists of approximately $\left\vert{P_m^*}\right\vert\approx\lceil{K/n}\rceil$ outer sections, so each inner loop will be executed
approximately $2^{\lceil{K/n}\rceil}$ times. Therefore, the total number of required iterations is $\mathcal{O} (2^{\lceil{K/n}\rceil}n )$ which is substantially less than that of the ES method.
Moreover, the per-iteration computational load of Algorithm~\ref{pattern-generation-alg} is less than that of ES. However, recall that this GPS may not be the same as UPS and there may be patterns in the UPS that can not be constructed by the proposed algorithm.
To evaluate the pattern construction capability of Algorithm~\ref{pattern-generation-alg}, we consider two different CNs illustrated
in Fig.~\ref{three-scenarios} depicting (a) 9-cell and (b) 6-cell scenarios, which will be used in numerical examples.
We use the FFR-3 interference relationships and subsequently MPS-3 as input to the proposed algorithm.  The
ES and Algorithm~\ref{pattern-generation-alg} are run for each of the two scenarios and
the patterns constructed by each method are presented in
Tables~\ref{patterns-42-22} and \ref{patterns-13-10}, respectively, for the 9-cell and 6-cell scenarios.
We observe that ES generates the UPS with cardinality $42$ and $13$, whereas Algorithm~\ref{pattern-generation-alg} constructs
$22$ and $10$ patterns, respectively, for the 9-cell and 6-cell scenarios.
Although the proposed algorithm can not construct the UPS, we will later show through numerical examples that the network performance obtained by the GPS produced by the proposed algorithm is only slightly inferior to that attained by that of the UPS.

\begin{figure}[t]
\centering
\includegraphics[width=0.5\linewidth]{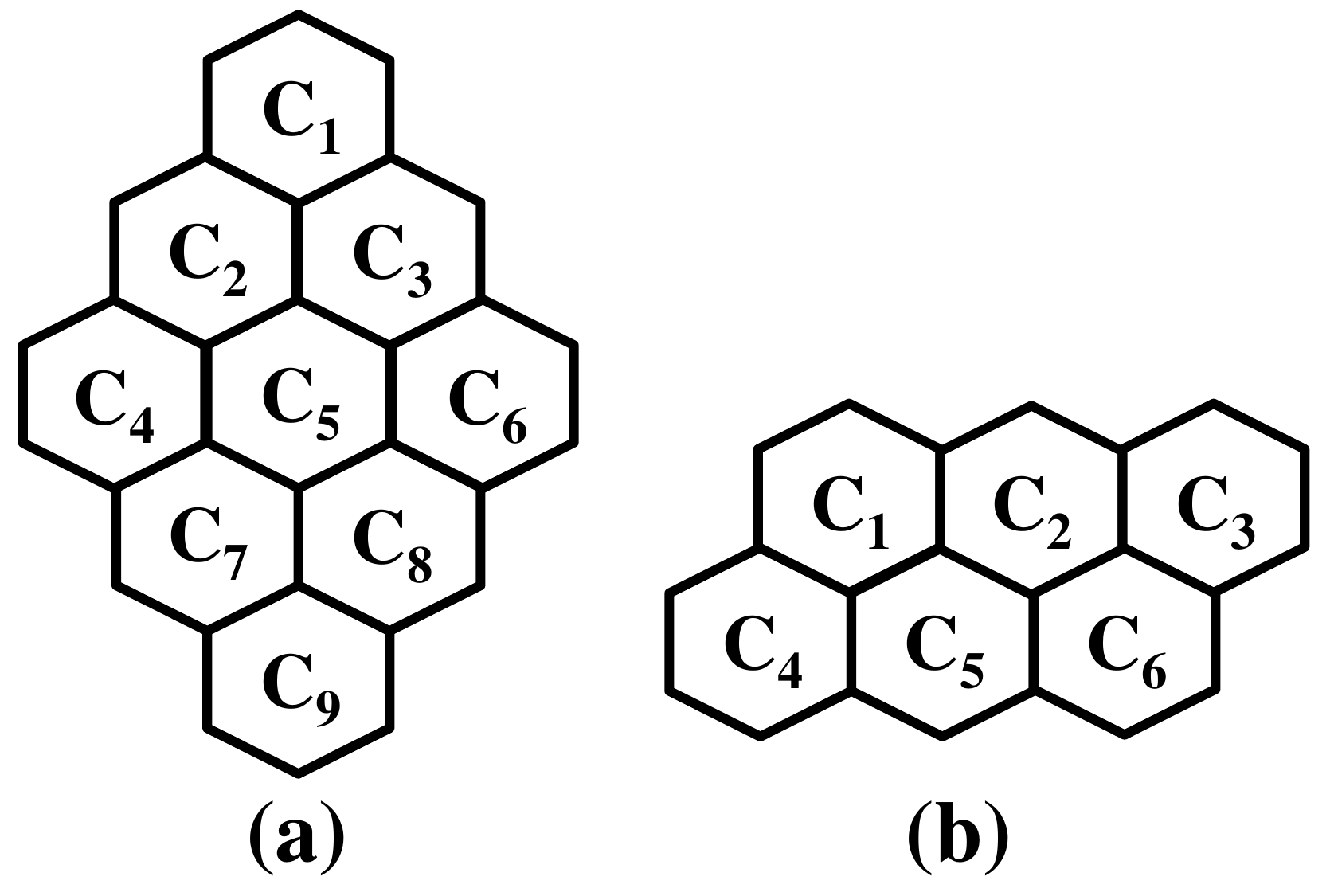}
\caption{Two different scenarios: (a) 9-cell CN (b) 6-cell CN.}
\label{three-scenarios}
\end{figure}

\begin{table}[h]
\centering
\caption{The UPS with 42 patterns for the 9-cell network. The patterns labeled with *  are constructed by Algorithm~\ref{pattern-generation-alg}.}
\label{patterns-42-22}
\begin{tabular}{lll|lll} \toprule
$m$  & $i: I_i \in P_m$  & $i: O_i \in P_m$  &$m$ & $i: I_i \in P_m$ & $i: O_i \in P_m$ \\  \toprule
 1*  &  $1,\ldots,9$     & $--$              &22* & $6,7,8,9$        & $2$              \\
 2*  &  $1,\ldots,6$     & $9$               &23  & $6$              & $2,9$            \\
 3*  &  $1,2,3,4$        & $8$               &24* & $6$              & $2,7$            \\
 4*  &  $1,2,3,6$        & $7$               &25  & $--$             & $2,8$            \\
 5*  &  $1,2,4,7,9$      & $6$               &26* & $7,9$            & $2,6$            \\
 6   &  $1,2,4$          & $6,9$             &27  & $--$             & $2,6,9$          \\
 7*  &  $1,2$            & $6,7$             &28* & $--$             & $2,6,7$          \\
 8*  &  $1,3,6,8,9$      & $4$               &29* & $4,\ldots,9$     & $1$              \\
 9   &  $1,3,6$          & $4,9$             &30* & $4,5,6$          & $1,9$            \\
 10* & $1,3$             & $4,8$             &31  & $4$              & $1,8$            \\
 11* & $1,9$             & $5$               &32  & $4,7,9$          & $1,6$            \\
 12* & $1$               & $5,9$             &33  & $4$              & $1,6,9$          \\
 13  & $1,9$             & $4,6$             &34  & $6$              & $1,7$            \\
 14  & $1$               & $4,6,9$           &35  & $--$             & $1,6,7$          \\
 15* & $4,7,8,9$         & $3$               &36* & $9$              & $1,5$            \\
 16  & $4$               & $3,9$             &37* & $--$             & $1,5,9$          \\
 17* & $4$               & $3,8$             &38  & $6,8,9$          & $1,4$            \\
 18  & $--$              & $3,7$             &39  & $6$              & $1,4,9$          \\
 19* & $8,9$             & $3,4$             &40  & $--$             & $1,4,8$          \\
 20  & $--$              & $3,4,9$           &41  & $9$              & $1,4,6$          \\
 21* & $--$              & $3,4,8$           &42  & $--$             & $1,4,6,9$        \\  \bottomrule

\end{tabular}
\end{table}

\begin{table}[h]
\centering
\caption{The UPS with 13 patterns for the 6-cell network. The patterns labeled with *  are constructed by Algorithm~\ref{pattern-generation-alg}. }
\label{patterns-13-10}
\begin{tabular}{lll|lll} \toprule
$m$  & $i: I_i \in P_m$  & $i: O_i \in P_m$  &$m$ & $i: I_i \in P_m$ & $i: O_i \in P_m$ \\  \midrule
 1*  &  $--$             & $1,6$             &8*  & $1,4,5$          & $3$              \\
 2*  &  $--$             & $2,4$             &9*  & $3,6$            & $1$              \\
 3*  &  $--$             & $3,5$             &10  & $--$             & $3,4$            \\
 4*  &  $1,\ldots,6$   & $--$              &11  & $--$             & $4,6$            \\
 5*  &  $2,3,6$          & $4$               &12* & $1,4$            & $6$              \\
 6*  &  $4$              & $2$               &13  & $--$             & $1,3$            \\
 7*  &  $3$              & $5$               &  & $$             & $$             \\  \bottomrule
\end{tabular}
\end{table}

\section{The Proposed Multi-cell Scheduler}
\label{section3}
In this section, we assume that the pattern set $\{ P_m \}_{m=1}^M$ is a-priori known whether be EPS-$n$, or the UPS if available, or the GPS produced by the algorithm presented in the previous section, or any other GPS. Given $\{ P_m \}_{m=1}^M$,  the scheduler we propose will go through the following  process at the beginning of each time slot
$\tau$:
\begin{enumerate} [label=(\roman*),leftmargin=10mm]
\item at the lower level, each BS will nominate one user for each of its inner and outer sections and their available transmission rates to the central controller,
\item at the upper level, the central controller computes the potential overall transmission rate for each pattern and decides to activate one of them, say $P_{m^*(\tau)}$, from the $M$ available patterns in the GPS, and mutes all other patterns,
\item at the lower level and for each section included in the chosen  pattern $P_{m^*(\tau)}$, the nominated user from step (i) is activated to receive traffic from its associated BS.
\end{enumerate}
Note that the decision is not made by the central controller in step (i), but is instead made by the local base station improving the scalability of the proposed architecture.
In this paper, such pattern activation and user selection decisions are made with throughput optimization in mind but also taking into consideration certain temporal fairness constraints. For such constraints, we need definitions for per-pattern and per-user temporal shares which are provided in the next subsection.
\subsection{Per-pattern and Per-user Temporal Shares}
Let $A_m$, $a^I_{i,j}$, and $a^O_{i,j}$ denote the long-term temporal share of pattern $P_m$, user $U^I_{i,j}$ and user $U^O_{i,j}$, respectively:
\begin{eqnarray*}
   A_m & = &\lim_{t\to\infty}   \frac{1}{t} \sum_{\tau=1}^t \mathbbm{1}{\{m^*(\tau)=m\}} \label{patternshare} \\
   a^I_{i,j} & = &\lim_{t\to\infty}  \frac{1}{t}\sum_{\tau=1}^{t}\mathbbm{1}{\{\text{user}~U^I_{i,j}~\text{is scheduled at slot}~\tau\}} \label{usershare-I} \\
   a^O_{i,j} & = & \lim_{t\to\infty} \frac{1}{t}\sum_{\tau=1}^{t}\mathbbm{1}{\{\text{user}~U^O_{i,j}~\text{is scheduled at slot}~\tau\}} \label{usershare-O}
\end{eqnarray*}
where $\mathbbm{1}{\{ \cdot \}}$ denotes the conventional indicator function which is either one or zero depending on whether the argument is true or not, respectively.
Similarly, let $A^I_i$, $A^O_i$, and $A^C_i$ denote the long-term air-time share of section $I_i$, section $O_i$, and cell $C_i$, respectively. Mathematically,
\begin{eqnarray*}
A^I_i & = & \lim_{t\to\infty}  \frac{1}{t} \sum_{\tau=1}^{t} \mathbbm{1}{ \{ I_i \in P_{m^*(\tau)}\}}    \label{sectionshare-I}\\
A^O_i  & = &  \lim_{t\to\infty} \frac{1}{t} \sum_{\tau=1}^{t} \mathbbm{1}{ \{ O_i \in P_{m^*(\tau)}\}}   \label{sectionshare-O}\\
A^C_i & = & A^I_i +A^O_i \label{cellshare}.
\end{eqnarray*}
We introduce positive scheduling weights $\{w_m\}_{m=1}^M$ for each pattern $P_m$ satisfying $\sum_m w_m=1$. The scheduling weight $w_m$ is the long-term average probability that pattern $P_m$ is selected by the scheduler. Note that use of scheduling weight $\{w_m\}_{m=1}^M$ by the scheduler yields $A_m=w_m,\forall m$.
In this case, the system is said to be inter-pattern weighted temporal fair with respect to the weights $\{w_m\}_{m=1}^M$. In the specific case of $w_m=1/M$, the system is called inter-pattern temporal fair.
Similarly, we introduce positive scheduling weights $\{w^I_{i,j}\}_{j=1}^{N^I_i}$ and $\{w^O_{i,j}\}_{j=1}^{N^O_i}$ for user $U^I_{i,j}$ and user $U^O_{i,j}$, respectively, satisfying $\sum_j w^I_{i,j}=1$ and $\sum_j w^O_{i,j}=1$ for each of the cells $C_i$.
When $a^I_{i,j}=A^I_i w^I_{i,j}, \forall j$ for any section $I_i$ and $a^O_{i,j}=A^O_iw^O_{i,j}, \forall j$ for any section $O_i$, then we have intra-section weighted temporal fairness in sections $I_i$ and $O_i$, respectively, with respect to the weights $\{w^I_{i,j}\}$  and $\{w^O_{i,j}\}$.
In this paper, we only consider ordinary intra-section temporal fairness which leads to the following two identities:
\begin{eqnarray}
 w^I_{i,j} & = & \frac{1}{N^I_i}, ~1 \leq j \leq N^I_i, \label{intra-section-fairness-inner-user-weight} \\
 w^O_{i,j} & = & \frac{1}{N^O_i}, ~ 1 \leq j \leq N^O_i. \label{intra-section-fairness-outer-user-weight}
\end{eqnarray}
In case when $N^I_i$ or $N^O_i$ is zero, no per-user scheduling weights are assigned for that particular section. Moreover, no downlink transmissions would take place due to the lack of a user in that section even if the included pattern is selected for transmission.
Therefore, the only unknowns to the scheduler in the numerical examples of the current study are the per-pattern weights. Once the weights are decided, then the following identities immediately hold:
\begin{equation}
A^I_i =  \sum_{m=1}^M w_m e_{m,i }, \quad
a^I_{i,j} = \frac{A^I_i}{N^I_i}, \label{sectionshare-I2}
\end{equation}
and
\begin{equation}
A^O_i  =  \sum_{m=1}^M w_m f_{m,i}, \quad
a^O_{i,j}  = \frac{A^O_i}{N^O_i}, \label{sectionshare-O2}
\end{equation}
 where  $e_{m,i} \triangleq \mathbbm{1}{\{I_i \in P_m\}}$ and $f_{m,i} \triangleq \mathbbm{1}{\{O_i \in P_m\}}$.
In the next subsection, we focus on methods of obtaining these weights leading to
two different forms of temporal fairness  being sought in the CN.
\subsection{Temporal Fairness Criteria}
\label{subsec:ISPTF-MMTF}
In this paper, we consider two different TF criteria for the multi-cell CN for which the per-pattern weights can easily be obtained at reasonable computational complexity.
The first TF criterion is the so-called Inter-Section Proportional Temporal Fairness (IS-PTF) in which the CN is inter-cell temporal fair but an inner section of each cell receives a temporal share proportional with the temporal share of the outer section of the same cell using the same network-wide proportionality constant. Mathematically, there holds
 $A^C_i =A^C_j, 1 \leq i,j \leq K$ and $A^I_i  = d A^O_i, 1 \leq i \leq K$ for a fairness proportionality constant $d$ to be chosen by the network operator.
As a matter of fact, employing $d>1$ causes the inner sections to be scheduled more often than the outer sections and vice versa for the case $0<d<1$.
It is clear that  IS-PTF can not be achieved for some general pattern sets such as GPS of Fig.~\ref{GPS-eg}. However, as stated in the theorem below, IS-PTF is achievable when EPS-$n$ is used as the pattern set.
\begin{Theorem}
With EPS-$n$ used as the pattern set,  IS-PTF is achieved by the following choice of weights:
\begin{align}
 &w_i  = \frac{1}{d+n} \quad 1 \leq i \leq n,  \label{inter-cell-fairness1}\\
 &w_{n+1}  =  \frac{d}{d+n},  \label{inter-cell-fairness2}
\end{align}
where $M=n+1$ and the pattern $P_{M}$ is the set of all inner sections of the CN, i.e.,  $P_{M}= \bigcup_{i=1}^K I_i$.
\end{Theorem}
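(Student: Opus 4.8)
The plan is to exploit the special combinatorial structure of EPS-$n$ together with the fact, already noted in the paper, that a scheduler using per-pattern weights $\{w_m\}_{m=1}^M$ with $\sum_m w_m = 1$ realizes $A_m = w_m$ for all $m$, so that the section shares are given by $A^I_i = \sum_{m=1}^M w_m e_{m,i}$ and $A^O_i = \sum_{m=1}^M w_m f_{m,i}$. Recall that EPS-$n$ has $M = n+1$ patterns: the $n$ mother patterns $P_1,\dots,P_n$, which are mutually exclusive and each of which consists only of outer sections, and the pattern $P_{n+1} = \bigcup_{i=1}^K I_i$, which consists of exactly the inner sections. Two structural observations drive the proof: (a) for every cell $C_i$, the inner section $I_i$ lies in $P_{n+1}$ and in no other pattern, so $e_{m,i} = \mathbbm{1}\{m = n+1\}$; and (b) the outer section $O_i$ lies in exactly one mother pattern, say $P_{m(i)}$ with $1 \le m(i) \le n$, and in no other pattern, so $f_{m,i} = \mathbbm{1}\{m = m(i)\}$.

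First I would verify that the proposed weights are admissible, i.e., positive and summing to one: $\sum_{i=1}^{n} \tfrac{1}{d+n} + \tfrac{d}{d+n} = \tfrac{n+d}{d+n} = 1$. Hence a scheduler realizing $\{w_m\}$ exists and yields $A_m = w_m$ for all $m$.

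Next I would substitute observations (a) and (b) into the section-share identities. Observation (a) gives $A^I_i = w_{n+1} = \tfrac{d}{d+n}$ for every $i$, and observation (b) gives $A^O_i = w_{m(i)} = \tfrac{1}{d+n}$ for every $i$. Therefore $A^I_i = \tfrac{d}{d+n} = d\cdot\tfrac{1}{d+n} = d\,A^O_i$ for all $i$, which is the inter-section proportionality with constant $d$, and $A^C_i = A^I_i + A^O_i = \tfrac{d+1}{d+n}$, which is independent of $i$, so $A^C_i = A^C_j$ for all $i,j$. Together these are exactly the defining conditions of IS-PTF, completing the proof.

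The argument reduces to a direct computation once the structure of EPS-$n$ is made explicit, so I do not anticipate a genuine obstacle; the only point deserving care is the justification of observations (a) and (b) — that the mother patterns are pairwise disjoint and contain no inner section, while $P_{n+1}$ contains all inner sections and no outer section. This is immediate from the FFR-$n$-based construction of EPS-$n$ (equivalently, the mutual exclusivity of MPS-$n$ already recorded in the text) together with interference identity \eqref{I-Ineighborhood}.
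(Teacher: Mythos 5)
Your proof is correct and rests on the same structural facts as the paper's own argument (mutual exclusivity of the mother patterns, each outer section belonging to exactly one of them, and all inner sections belonging only to $P_{n+1}$, so that $A^O_i=w_{m(i)}$ and $A^I_i=w_{n+1}$). The only difference is the direction of the argument: you substitute the stated weights and verify the IS-PTF conditions directly, whereas the paper starts from the IS-PTF constraints and solves the resulting linear system for the weights, which additionally yields that this choice is the unique one achieving IS-PTF.
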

\begin{proof}
We note that EPS-$n=$ MPS-$n \: \cup \: P_M$ and $A_m, 1\leq m \leq n$ and $A_M$ are the airtime shares of $m^{th}$ individual MPS-$n$ and $P_M$, respectively. Furthermore, the scheduler can guarantee that $A_m=w_m, 1\leq m \leq n$ and $A_M=w_M$.
Patterns in MPS-$n$ are mutually exclusive, thus for each section $O_i$ there is only one index $m, \:1\leq m \leq n$ such that  $A^O_i=w_m$. Furthermore, $\forall i: A^I_i=w_M$. We thus conclude that for all $i$, there exists an $m,~  1 \leq m \leq n$ such that $A^C_i=w_m+w_M$.
Recall that inter-cell fairness requires $\forall m_1,m_2,~ 1 \leq m_1,m_2 \leq n:~ w_{m_1}+w_M=w_{m_2}+w_M$ which in turn implies that $\forall m_1,m_2,~ 1 \leq m_1,m_2 \leq n: ~w_{m_1}=w_{m_2} \triangleq w$.
Moreover, using inter-section fairness we have $\forall m,~ 1 \leq m\leq n:w_M=d w_m=d w$. Finally, the pattern weights sum to unity, i.e. $\sum_{m=1}^nw_m+w_M=1$. Therefore, $nw+dw=1$ which implies that $\forall m,~ 1 \leq m\leq n: w_m=w=\frac{1}{d+n}$ and $w_M=w_{n+1}=\frac{d}{d+n}$.
\end{proof}
For the special case of EPS-3, there are $M=4$ patterns depicted in  Fig.~\ref{patternFFR3} where the corresponding weights are: $w_1=w_2=w_3=\frac{1}{d+3}$ and $w_4=\frac{d}{d+3}$.

The second TF criterion we study in this paper is Max-Min Temporal Fairness (MMTF) for which the per-pattern weights are selected so that
the minimum user temporal share in the CN is maximized.
The MMTF is easily shown to be reducible to the following linear program:
\begin{align}
\text{max} \quad
z &  \label{MMTF}\\
\quad \text{\emph{subject to:}} \notag\\
\quad &z  \leq \frac{A^I_i}{N^I_i},\quad z  \leq \frac{A^O_i}{N^O_i} \quad &1 \leq i \leq  K, \notag\\
\quad &A^I_i  = \sum_{m=1}^M w_m e_{m,i} \quad &1 \leq i \leq  K, \notag\\
\quad &A^O_i  = \sum_{m=1}^M w_m f_{m,i} \quad &1 \leq i \leq  K, \notag\\
\quad &\sum_{m=1}^M w_m =1, \quad w_m \geq 0 \quad &1 \leq m \leq  M.\notag
\end{align}


It is not difficult to show that for a given GPS, this linear program has at least one solution and in the case of a non-unique solution, the objective function value, i.e., $z$, is the same for all the solutions due to the convexity of the program. While there is no general closed form solution for MMTF, Theorem \ref{MMTF-solution} provides a closed form solution when the patterns are mutually exclusive (e.g. EPS-$n$). When patterns are not mutually exclusive, there are numerous efficient numerical methods including simplex and interior point algorithms via which one can obtain a solution for MMTF with reasonable effort.

\begin{Theorem}
\label{MMTF-solution}
  If the patterns are mutually exclusive, the solution of the MMTF problem is unique and we have
  \begin{align}
    w^*_m&=\frac{n_m}{\sum_{l=1}^{M}n_l} \quad 1 \leq m \leq M, \label{mutually-ex-w}\\
    z^*&=\frac{1}{\sum_{l=1}^{M}n_l}, \label{mutually-ex-z}
  \end{align}
  where $n_m$ denotes the number of users in the most crowded section of pattern $m$, i.e., $n_m=max_i\{N^I_ie_{m,i},N^I_if_{m,i}\}_{i=1}^K$.
\end{Theorem}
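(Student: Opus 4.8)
The plan is to exploit the structural consequence of the hypothesis: mutually exclusive patterns that together constitute a GPS \emph{partition} the collection of all sections, so every inner section $I_i$ and every outer section $O_i$ lies in exactly one pattern. First I would feed this into the air-time identities (\ref{sectionshare-I2})--(\ref{sectionshare-O2}): if $m$ is the unique index with $I_i\in P_m$ then $e_{m',i}=\mathbbm{1}\{m'=m\}$, hence $A^I_i=w_m$; symmetrically $A^O_i=w_m$ whenever $O_i\in P_m$. Substituting this into the linear program (\ref{MMTF}), the feasibility cuts $z\le A^I_i/N^I_i$ and $z\le A^O_i/N^O_i$, grouped by the pattern to which the section belongs, collapse for each fixed $m$ to the single tightest inequality $z\le w_m/n_m$, where $n_m$ is exactly the number of users in the most crowded section contained in $P_m$. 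Equivalently, every feasible point obeys $w_m\ge z\,n_m$ for all $m$.

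Next I would sum these $M$ inequalities and invoke the normalization $\sum_{m=1}^M w_m=1$ to get $1\ge z\sum_{l=1}^M n_l$, i.e.\ the a-priori upper bound $z\le 1/\sum_{l=1}^M n_l$ on the objective of (\ref{MMTF}). To conclude that this bound is the optimal value it suffices to check that the candidate $w_m^\ast=n_m/\big(\sum_{l=1}^M n_l\big)$ is feasible and attains it: the $w_m^\ast$ are nonnegative and sum to one, and for every section one has $A^I_i/N^I_i\ge w_m^\ast/n_m=1/\sum_{l=1}^M n_l$, and likewise for outer sections, so $\big(\{w_m^\ast\},\,z^\ast\big)$ with $z^\ast=1/\sum_{l=1}^M n_l$ is optimal, yielding (\ref{mutually-ex-w})--(\ref{mutually-ex-z}).

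For uniqueness I would use a complementary-slackness style averaging argument: any optimal solution $\big(\{w_m\},z^\ast\big)$ must satisfy $w_m\ge z^\ast n_m$ for each $m$, while $\sum_m w_m=1=z^\ast\sum_m n_m=\sum_m z^\ast n_m$; the $M$ nonnegative slacks $w_m-z^\ast n_m$ therefore sum to zero and must each vanish, forcing $w_m=z^\ast n_m=w_m^\ast$ for every $m$.

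The only delicate point — and I do not expect it to be a real obstacle — is the treatment of degenerate cases. A section carrying no users contributes no constraint to (\ref{MMTF}) and likewise does not enter the $\max$ defining $n_m$, so it can be ignored; I would assume the mild non-degeneracy condition $\sum_{l} n_l>0$ so that $z^\ast$ is a well-defined positive quantity, and a pattern containing no users at all simply receives weight $w_m^\ast=0$, which the uniqueness argument above still pins down. Once the partition property has been used to decouple the problem section by section, the remaining content is the short averaging argument just described.
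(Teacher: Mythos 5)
Your proof is correct, and it reaches the result by a somewhat different route than the paper. Both arguments start from the same structural observation --- mutual exclusivity means each section lies in exactly one pattern, so the constraints of (\ref{MMTF}) collapse to $z \le w_m/n_m$ for each $m$ --- but they diverge afterwards. The paper argues by perturbation: it assumes the ratios $w^*_m/n_m$ are not all equal at the optimum, transfers weight from a pattern with a larger ratio to the one attaining the minimum to strictly improve the objective, derives a contradiction, and then solves the resulting system of $M+1$ linear equations. You instead sum the inequalities $w_m \ge z\,n_m$ against the normalization $\sum_m w_m = 1$ to obtain the a-priori bound $z \le 1/\sum_l n_l$ (in effect, weak LP duality with the all-ones multiplier), and then exhibit the candidate weights as a feasible point attaining it. Your route buys two things: it avoids the slight delicacy in the exchange argument when several patterns simultaneously attain the minimum ratio, and your uniqueness step --- the $M$ nonnegative slacks $w_m - z^* n_m$ sum to zero and hence all vanish --- is more explicit than the paper's appeal to the invertibility of the $M+1$ equations; it also cleanly pins down $w^*_m = 0$ for a pattern containing no users, a degenerate case the paper does not discuss. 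The paper's argument, for its part, conveys the water-filling intuition (equalize the per-pattern worst-case shares) more directly. Both are valid proofs of the same statement.
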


\begin{proof}
Let $\{w^*_m\}_{m=1}^M$ and $z^*$ denote the solution of MMTF problem (\ref{MMTF}). Because the patterns are mutually exclusive, each section of the network is covered by one pattern. Therefore, according to (\ref{sectionshare-I2}), the minimum user airtime in pattern $m$ is $w^*_m/n_m$. In other words, users in the most crowded section of pattern $m$ have the minimum airtime share among all the users in that pattern. The claim is that the value of $w^*_m/n_m$ is the same for every $m$. Let us assume that the claim is not correct. Therefore, if the user with minimum airtime share is in pattern $m'$, then there exists at least one $m''$ for which $w^*_{m'}/n_{m'} < w^*_{m''}/n_{m''}$. It is clear that we can increase the user minimum airtime share in the network by decreasing $w^*_{m''}$ and increasing $ w^*_{m'}$. Therefore, $w^*_{m''}$ and $w^*_{m'}$ are not optimal which is contradiction and the claim is correct. We can conclude that $\forall m:~z^*=w^*_m/n_m$. On the other hand, we know that $\sum_m w_m=1$. Eventually, we can find the unique solution of MMTF problem, i.e., $\{w^*_m\}_{m=1}^M$ and $z^*$, by solving the system of these $M+1$ independent linear equations as given in (\ref{mutually-ex-w}) and (\ref{mutually-ex-z}).
\end{proof}

We should note that unlike IS-PTF, the MMTF problem always has a solution for any GPS. We also note that to solve the MMTF problem,
each $BS_i$ sends the values $N_i^I$ and $N_i^O$ to the upper level which subsequently solves the linear program to obtain the per-pattern scheduling weights
$\{w_m\}_{m=1}^M$. Note that it is not necessary to solve the MMTF problem at each time slot. Instead, the MMTF problem can be solved when the number of users in any one of the sections of the CN changes. Alternatively, the MMTF linear program can be solved if the change in the number of users in the individual sections of the CN is substantial and the previous solutions's weights can be used until such a substantial change. It is clear that this alternative mechanism may reduce the computational burden on the central controller.

For the two criteria for fairness studied in this paper, namely IS-PTF and MMTF, we have described methods by which the scheduling weights $\{w_m\}_{m=1}^M$ are obtained.
Exploration of other temporal fairness criteria are left for future research.
In the next subsection, we are going to introduce a two-level scheduler that uses these scheduling weights so as to make opportunistic decisions to select patterns at the upper level and also users belonging to these patterns at the lower level.
\subsection{Two-level Opportunistic Scheduler}
In this section, we assume that the scheduling weights $\{w_m\}_{m=1}^M$, $\{w^I_{i,j}\}_{j=1}^{N^I_i}$, and $\{w^O_{i,j}\}_{j=1}^{N^O_i}$ are a-priori known.
Although the per-user scheduling weights in the current study are based on the identities (\ref{intra-section-fairness-inner-user-weight}) and
(\ref{intra-section-fairness-outer-user-weight}) , the algorithm below works for arbitrary per-user scheduling weights $\{w^I_{i,j}\}_{j=1}^{N^I_i}$, and $\{w^O_{i,j}\}_{j=1}^{N^O_i}$.
Subsequently, we introduce a counter $b^I_{i,j}$ for each inner section user $U^I_{i,j}$, a counter $b^O_{i,j}$ for each outer section user $U^O_{i,j}$ and a counter $B_m$ for each pattern $P_m$. We set these counter values to zero at the beginning of network operation. We also define the instantaneous Spectral Efficiency (SE) $r^I_{i,j}(\tau)$ ($r^O_{i,j}(\tau)$) in units of bits/s/Hz for user $U^I_{i,j}$ ($U^O_{i,j}$) at time slot $\tau$. In particular, in our numerical experiments, we use the Shannon formula
\begin{eqnarray}
r^I_{i,j}(\tau) & = &\log_2(1+\text{SNR}^I_{i,j}(\tau)),\label{SE-I} \\
r^O_{i,j}(\tau) & =&\log_2(1+\text{SNR}^O_{i,j}(\tau)),\label{SE-O}
\end{eqnarray}
where $\text{SNR}^I_{i,j}(\tau)$ and $\text{SNR}^O_{i,j}(\tau)$ denote the signal to noise ratio of user $U^I_{i,j}$ and $U^O_{i,j}$, respectively at slot $\tau$ \cite{tse_book} but also other relationships of SE to the SNR than (\ref{SE-I}) and (\ref{SE-O}) can be used. Next, we describe the two-level multi-cell scheduling algorithm we propose at a given time slot $\tau$. At the cell level, for each cell $C_i$, its $BS_i$ selects two users $j^*_{i,I}$ and $j^*_{i,O}$ from the inner section $I_i$ and outer section $O_i$ of the cell, respectively, based on the instantaneous user SEs and user counter values as follows:
\begin{eqnarray}
j^*_{i,I} &  =&
{\operatorname{arg\, max}}_ {1 \leq j \leq  N^I_i }  \;\;  r^I_{i,j}(\tau)+\alpha  b^I_{i,j}, \label{user-selection-I}\\
j^*_{i,O} &  =&   {\operatorname{arg\, max}}_ {1 \leq j \leq  N^O_i }  \;\; r^O_{i,j}(\tau)+\alpha  b^O_{i,j}, \label{user-selection-O}
\end{eqnarray}
where $\alpha > 0$ is an algorithm parameter that we will study in the numerical examples to be shown to affect the convergence time and the overall network throughput.  For each cell, the $BS_i$ then nominates the users $j^*_{i,I}$ and $j^*_{i,O}$ and the instantaneous per-section SEs of the sections $I_i$ and $O_i$, denoted by
\begin{equation}
R^I_i(\tau) =r^I_{i,j^*_{i,I}}(\tau), \;\; R^O_i(\tau) =r^O_{i,j^*_{i,O}}(\tau), \label{instant}
\end{equation}
respectively, if the nominated users were to be served. In the second step, for all the cells, the $BS_i$ disseminates the values $R^I_i(\tau)$ and $R^O_i(\tau)$ to the upper level which is to subsequently select a pattern to be activated. For the pattern selection, in the third step, we obtain the network-wide SE of pattern $P_m$, denoted by $R_m(\tau )$  as follows:
\begin{equation}
R_m(\tau)=\sum_{i=1}^K \bigg( R^I_i(\tau) e_{m,i}+  R^O_i(\tau) f_{m,i}\bigg), ~ 1 \leq m \leq M.
\end{equation}
In the fourth step, the upper level selects the pattern $P_{m^*}$ on the basis of the following identity:
\begin{equation}
m^*  = \operatorname{arg\,max}_{1 \leq m \leq M } R_m(\tau)+\beta B_m, \label{pattern-selection}
\end{equation}
where $\beta$ is another algorithm parameter similar to $\alpha$ in (\ref{user-selection-I}) and (\ref{user-selection-O}). Once $m^*$ is determined, the pattern counters are updated in the fifth step as follows:
\begin{equation}
B_m =B_m+w_m-\mathbbm{1}{\{m=m^*\}}\quad 1\leq m\leq M. \label{pattern-bucket-update}
\end{equation}

The upper level then sends a message to all cells with the information on which pattern was selected in the current slot.
In the final step of the algorithm, the nominated users in the sections belonging to the selected pattern $P_{m^*}$ are scheduled in the current time slot.
Moreover, the counters of users in the sections belong to the selected pattern $P_{m^*}$ are updated as follows:
\begin{eqnarray}
 & b^I_{i,j}  =  b^I_{i,j}+w^I_{i,j}-\mathbbm{1}{\{j=j^*_{i,I}\}}, \; I_i \in P_{m^*}, \\
 & b^O_{i,j}  =  b^O_{i,j}+w^O_{i,j}-\mathbbm{1}{\{j=j^*_{i,O}\}}, \;O_i \in P_{m^*}.
 \end{eqnarray}
The counters of users in those sections that do not reside in the selected pattern are not updated and those sections are muted in the current slot.
The upper level scheduler has $\mathcal{O}(MK)$ computational complexity and $\mathcal{O}(M)$ storage requirements and presents a scalable solution when compared to existing methods whose complexity depend on the overall number of users $N$ in the network. Due to low communications overhead between the lower level BSs and upper level centralized entity, the proposed method is practical and can be implemented using a low-delay backhaul.

Moreover, the long-term average network throughput $T$ is defined as follows:
\begin{align}
T  = \lim_{t \rightarrow \infty} \frac{BW}{t} \sum_{\tau=1}^t R_{m^*}(\tau).
\end{align}
\subsection{Remarks on the Proposed Algorithm}
Per-pattern buckets are updated in
(\ref{pattern-bucket-update}) and it is obvious that these buckets can not grow to either plus or minus infinity. This can be observed from
the decision to activate a pattern made at the upper level according to (\ref{pattern-selection}) which ensures that buckets will stay bounded. Bounded buckets are indication of satisfaction of temporal fairness constraints at the pattern level. Similar conclusions can be drawn for the per-user buckets and the satisfaction of temporal fairness constraints at the user level. Moreover, the optimality of the proposed multi-cell scheduler stems from
the structure of the two TF schedulers, the network-level TF scheduler (\ref{pattern-selection}) and the cell-level TF scheduler (\ref{user-selection-I}) and (\ref{user-selection-O}), which are the same as the single-cell optimum TF scheduler described
in \cite{shroff_jsac01} except that we use fixed coefficients $\alpha$ and $\beta$ as in \cite{shahsavari_akar_wcl15} instead of those that decay in time. The purpose of this choice is to satisfy fairness constraints not only in the long term but also in shorter time scales. Numerical examples
will be presented to validate these choices.

In the proposed algorithm, as in \cite{tsePF, shroff_jsac01} and \cite{shahsavari_akar_wcl15}, all the users are assumed to be persistent, i.e., they always have data to receive.
However, in reality, users will occasionally receive data in busy periods followed by idle periods during which there would not be any traffic destined to these users.
A solution to this can be obtained by each BS to disseminate the number of backlogged users in each of its two sections at each time slot to the central controller which finds the scheduling weights either by IS-PTF or MMTF formulations again at each slot. However, this increases the computational burden for the MMTF formulation on the central controller because in this case, the linear program will change more rapidly. On the other hand, since the IS-PTF formulation has an explicit solution for EPS-$n$, this would not lead to any burden. Consideration of dynamic user traffic patterns for the MMTF formulation are left for future research.

\section{Numerical Examples}
\label{section4}
In all the numerical examples, we use one of the 6-cell and 9-cell CNs provided in Fig.~\ref{three-scenarios} and an additional 37-cell CN depicted in Fig.~\ref{37cell}. The radii parameter $R_I$ and $R_C$ are assumed to be $0.5$ km and $1$ km, respectively, for all the CNs. The system frequency is assumed to be $2$ GHz and $BW$ is set to $20$ MHz. Noise power spectral density and noise figure are assumed to be $-174~dBm/Hz$ and $9~dB$, respectively. Transmission power of each BS to the inner and outer users are $30~dBm$ and $40~dBm$, respectively. It is straightforward to show that the choice of transmit powers and radii parameters $R_C$ and $R_I$ satisfy (\ref{O-Oneighborhood})-(\ref{I-Ineighborhood}) for $n=3$ which is our focus in this section. The large-scale fading channel coefficients are modeled based on the ‘COST-231’ model as $-140.7-35.2\log_{10}(d^{U}_{BS})+\Psi$ in $dB$ scale, where $d^{U}_{BS}$ is the distance between the corresponding user and BS, and $\Psi \sim N(0,\sigma^2_{\Psi})$ represents the log-normal shadowing effect. We assume that $\sigma_{\Psi}=4~dB$.  Rayleigh fading model is adopted for small-scale channel coefficient variations \cite{rappaport_book}. In the first two examples, we concentrate on IS-PTF. The next three numerical examples study the MMTF formulation.
\begin{figure}[h]
\centering
\includegraphics[width=7cm]{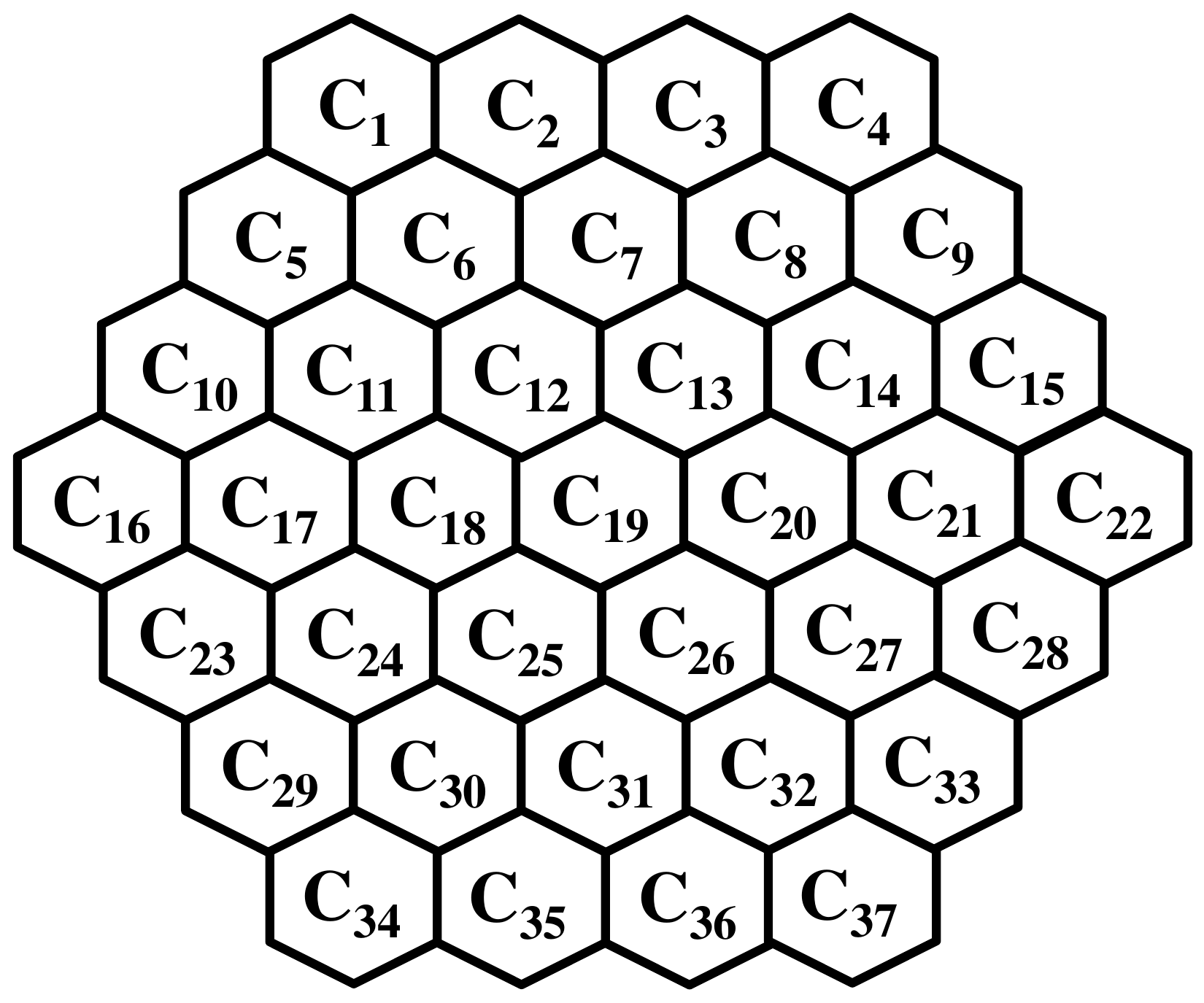}
\caption{37-cell cellular network}
\label{37cell}
\end{figure}
\subsection{Study of  Scheduler Parameters $\alpha$ and $\beta$}
In this example, we study the effect of the scheduler parameters $\alpha$ and $\beta$ employed through the identities
(\ref{user-selection-I}), (\ref{user-selection-O}), and (\ref{pattern-selection}), on the convergence time and long-term network throughput.
We assume that the scheduler uses EPS-3 with cardinality 4 in the 9-cell network of Fig.~\ref{three-scenarios}a.  We employ IS-PTF with the parameter $d$ set to unity,
 i.e., we seek ordinary inter-cell and inter-section fairness in this example. For a given section $I_i$ ($O_i$), let $J^I_i(t)$ ($J^O_i(t)$) denote the Jain's fairness index for the temporal shares  $a^I_{i,j}(t)$ ($a^O_{i,j}(t)$) which are shares of the users
$U^I_{i,j}$ $(U^O_{i,j}(t))$ up to time $t$.  We refer to \cite{jain_book} for the definition of Jain's fairness index.  Let us also define the
intra-section fairness $J^I(t)=min_iJ^I_i(t)$ ($J^O(t)=min_iJ^O_i(t)$) for inner (outer) sections.
Proximity of $J^I(t)$ ($J^O(t)$) to unity is representative of intra-section fairness for the inner (outer) section users up to time $t$.
Also, let the inter-pattern fairness index $J(t)$ be defined by Jain's fairness index for the individual per-pattern temporal shares $\{ A_m(t) \}$.
Similarly, proximity of $J(t)$ to unity is  representative of inter-pattern fairness up to time $t$. Furthermore, we note that inter-pattern fairness is equivalent to ordinary inter-cell fairness in this example. We assume $N=64$ uniformly located users in the 9-cell network of
Fig.~\ref{three-scenarios}a for each of the 20 simulation instances and for each instance we run the two-level scheduler for a duration of $5\times10^6$ slots with various choices of $\alpha$ and $\beta$. In each simulation instance, we obtain the values $\Gamma_J$ and $\Gamma_J^I$ ($\Gamma_J^O$), which are defined as the minimum value of $t$ such that $J(Ht)\geq 1-\epsilon$ and $J^I(Ht)\geq 1-\epsilon$ ($J^O(Ht)\geq 1-\epsilon$), respectively, for a small tolerance parameter $\epsilon>0$ which is set to $0.05$, and for a sampling parameter $H$ set to $1000$. A relatively large value of $\Gamma^I_J$ ($\Gamma^O_J$) is indicative of longer convergence times and therefore adverse impact on short-term intra-section fairness for inner (outer) section users. On the other hand, a relatively large value of $\Gamma_{J}$ is indicative of short-term inter-cell unfairness. The steady-state throughput $T$ and three fairness metrics $\Gamma_J$, $\Gamma_J^I$, and $\Gamma_J^O$ (average values obtained over the $20$ simulation instances) are tabulated in Table~\ref{throughput-fairness-table} for various choices of $\alpha$ and $\beta$.
According to Table~\ref{throughput-fairness-table}, while $\Gamma_J$ is more sensitive to the change of $\beta$, $\Gamma_J^I$ and $\Gamma_J^O$ are more sensitive to the change of $\alpha$. This is simply because coefficients $\beta$ and $\alpha$ directly appear in pattern and users selection levels, respectively. We also observe that with relatively smaller choices of the algorithm parameters $\alpha$ and $\beta$, the fairness indices converge slower and short-term inter-cell and intra-section fairness measures are compromised. The short-term inter-cell fairness measures appear to be less problematic. However, the total long-term network throughput $T$ is slightly improved with lower choices for  $\alpha$ and $\beta$. As a trade-off between short-term inter-cell and intra-section fairness and total network throughput, we fix $\alpha=\beta=0.01$ in the remaining numerical examples.

\begin{table}[h]
\centering
\caption{ The average throughput $T$, and the average of three fairness metrics $\Gamma_J$, $\Gamma_J^I$, and $\Gamma_J^O$ for various values of $\alpha$ and $\beta$ for the 9-cell CN.}
\label{throughput-fairness-table}
\begin{tabular}{llllll} \toprule
$\alpha$      &  $\beta$     & $T(Mbps)$ &    $\Gamma_J$&      $\Gamma_J^I$&      $\Gamma_J^O$              \\ \midrule
             &  $0.1$        & $125.4$         & $1.00$         & $1.40$           &  $2.60$                \\ 
$0.1$        &  $0.01$       & $125.5$         & $5.10$         & $1.05$           &  $3.00$                \\ 
             &  $0.001$      & $125.8$         & $45.35$        & $1.00$           &  $8.00$                \\ \midrule

             &  $0.1$        & $126.4$         & $1.05$         & $11.45$          &  $20.05$                \\ 
$0.01$       &  $0.01$       & $126.5$         & $4.90$         & $9.85$           &  $20.55$                \\ 
             &  $0.001$      & $126.9$         & $45.35$        & $4.35$           &  $25.60$                \\ \midrule

             &  $0.1$        & $126.9$         & $1.15$         & $111.40$         &  $195.90$                \\ 
$0.001$      &  $0.01$       & $127.1$         & $6.25$         & $110.00$         &  $196.50$                \\ 
             &  $0.001$      & $127.4$         & $44.05$        & $94.35$          &  $201.00$                \\  \bottomrule
\end{tabular}
\end{table}

%
%
\subsection{Comparison of Opportunistic FFR vs. Benchmark FFR}
The use of the pattern set EPS-3 with the proposed opportunistic scheduler based on IS-PTF is referred to as Opportunistic FFR (OFFR) in this paper. In this example, OFFR uses the IS-PTF formulation with a certain choice of the parameter $d$ introduced in (\ref{inter-cell-fairness1}) and (\ref{inter-cell-fairness2}). In particular, we first study three different values of the parameter $d$, namely $d=1/4,1,4$ and consequently use the per-pattern weights as given in (\ref{inter-cell-fairness1}) and (\ref{inter-cell-fairness2}). The benchmark system called Benchmark FFR (BFFR) splits the $BW$ into four sub-bands, the bandwidth of each sub-band being proportional to
$w_m$ for $m\in\{1,2,3,4\}$ based on the identity (\ref{inter-cell-fairness1}) and (\ref{inter-cell-fairness2}).
The intra-cell scheduler of BFFR is the same as that of the OFFR. Besides, we note that there is no central controller in BFFR and ICI is handled by static spectrum partitioning as mentioned above.
For each value of $d$, we simulated BFFR and OFFR in both 9-cell and 37-cell CNs for a total of 400 instances each of which spans $10^6$ time slots.
In each of these simulations, the number of users $N$ is set to 64 and the users are randomly spread over the CN.
 Let $T^{OFFR}$ and $T^{BFFR}$ denote the overall network throughput when we employ OFFR and BFFR, respectively.
Also, let $T_{i,j}^{I,OFFR}$ and $T_{i,j}^{I,BFFR}$ denote the average throughput of user $U_{i,j}^I$ when we employ OFFR and  BFFR, respectively.
Similarly, let $T_{i,j}^{O,OFFR}$ and $T_{i,j}^{O,BFFR}$ denote the average throughput of user $U_{i,j}^O$ when OFFR and BFFR, respectively, are employed.
Furthermore, let
\[
G= \frac{T^{OFFR}-T^{BFFR}}{T^{BFFR}}100\%,
\]
denote the relative percentage network throughput gain of OFFR over BFFR.
Similarly, let
\begin{eqnarray*}
G_{i,j}^I & = &  \frac{T_{i,j}^{I,OFFR}-T_{i,j}^{I,BFFR}}{T_{i,j}^{I,BFFR}}100\%, \\
G_{i,j}^O & = &  \frac{T_{i,j}^{O,OFFR}-T_{i,j}^{O,BFFR}}{T_{i,j}^{O,BFFR}}100\%
\end{eqnarray*}
denote the percentage relative user throughput gain of OFFR over BFFR for users $U_{i,j}^I$ and $U_{i,j}^O$, respectively.
Figures~\ref{user-gain-cdf-9cell} and
\ref{user-gain-cdf-37cell} illustrate the empirical Cumulative Distribution Function (CDF) of the gains $G_{i,j}^I$ and $G_{i,j}^O$ for the specific case of $N=64$
for 9-cell and 37-cell CNs, respectively. We first note that as the parameter $d$ decreases, the patterns including outer users are scheduled more frequently in OFFR and a wider frequency band is assigned to outer section users in BFFR. In all studied cases, a large majority of users benefit from OFFR in comparison with BFFR. We also observe that when $d$ increases, outer section users gain substantially more with OFFR in comparison with inner users.  For example, when $d=4$, outer (inner) section users' throughput gains reach $47\%$ ($20\%$). On the other hand, when $d=1/4$, inner (outer) section users' throughput gains reach $57\%$ ($34\%$). The gains obtained in this example justify the benefits of the two-level scheduler.


We further extend this example by varying $N \in \{32,64,128\}$ and the parameter $d \in \{1/4,1,4\}$ and simulate the same scenario.
Let $\mathbf{G}$, $\mathbf{G}^{\mathbf{I}}$ and $\mathbf{G}^{\mathbf{O}}$ denote the empirical means of the quantities $G$, $G^I_{i,j}$ and $G^O_{i,j}$, respectively, out of the 400 simulation instances. Also, let $\mathbf{P}$, $\mathbf{P}^{\mathbf{I}}$, and $\mathbf{P}^{\mathbf{O}}$ denote the percentage fraction of samples of $G$, $G^I_{i,j}$ and $G^O_{i,j}$, respectively, which are below zero, i.e., those scenarios or users who do not benefit from OFFR.  Table~\ref{DFFR-SFFR-table} provides the quantities  $\mathbf{G}$, $\mathbf{G}^{\mathbf{I}}$, $\mathbf{G}^{\mathbf{O}}$, $\mathbf{P}^{\mathbf{I}}$, and $\mathbf{P}^{\mathbf{O}}$, for varying choices of $d$ and $N$. We note that the quantity $\mathbf{P}$ were zero for all studied cases, i.e., all networks benefited from OFFR in comparison with BFFR in terms of average throughput.
Our findings are as follows:
\begin{itemize}[leftmargin=*]
\item{\textbf{Effect of $N$}}: We observe that the average gains $\mathbf{G}$, $\mathbf{G}^{\mathbf{I}}$, and $\mathbf{G}^{\mathbf{O}}$ appear to increase with decreasing average number of users per cell (for both CNs) in which case the multi-user diversity gain is limited with single-cell scheduling in BFFR. However, in such scenarios, network-wide multi-user diversity due to multi-cell scheduling helps improve user and network throughput in OFFR.
\item{\textbf{Effect of $d$}}: When $d$ increases, in both OFFR and BFFR systems, the whole frequency band is assigned to inner sections all the time. Therefore, for larger values of $d$, the data rate of the inner users is almost the same in both systems which explain the decrease in $\mathbf{G}^{\mathbf{I}}$ as $d$ increases. On the other hand, we observe that $\mathbf{G}^{\mathbf{O}}$ increases when $d$ increases. Furthermore, the behavior of $\mathbf{G}$ is similar to $\mathbf{G}^{\mathbf{I}}$ because data rate of inner users are typically higher than that of outer users.
\item{\textbf{Benefits to the users}}: A small fraction of users appear to lose with OFFR. On the other hand, up to $29.5\%$ and $23.5\%$ gains are observed in the empirical mean of inner and outer section users' throughput, respectively.
\item{\textbf{Benefits to the network operator}}: The quantity $\mathbf{P}$ was always zero for all the studied cases which means that OFFR consistently outperforms BFFR in terms of network throughput. In other words, OFFR is consistently beneficial from the network operator's perspective.
\end{itemize}

\begin{table}[h]
\centering
\caption{The comparative results associated with OFFR and BFFR when the system parameters $d$ and $N$ are varied.}
\label{DFFR-SFFR-table}
\begin{tabular}{clllllll} \toprule
Scenario  &  $d$          &  $N$          & $\mathbf{G}$   & $\mathbf{G}^{\mathbf{I}}$ & $\mathbf{G}^{\mathbf{O}}$    & $\mathbf{P}^{\mathbf{I}}$  & $\mathbf{P}^{\mathbf{O}}$   \\  \midrule

          &               &  $32$         & $24.5$        & $32.6$         & $16.8$       & $0$     & $0.5$      \\
          & $\frac{1}{4}$ &  $64$         & $16.6$        & $21.1$         & $10.6$       & $0.2$   & $2.0$        \\
          &               &  $128$        & $11.1$        & $14.7$         & $6.5$        & $0.5$   & $6.8$      \\  \cmidrule(l){2-8}

          &               &  $32$         & $24.1$        & $20.6$         & $19.0$       & $0$     & $0.4$      \\
37-cell   &  $1$          &  $64$         & $16.2$        & $13.7$         & $11.9$       & $0.3$   & $2.1$      \\
          &               &  $128$        & $10.9$        & $8.8$          & $7.2$        & $1.1$   & $6.7$      \\  \cmidrule(l){2-8}

          &               &  $32$         & $19.8$        & $10.5$         & $23.5$       & $0$     & $0.4$      \\
          &  $4$          &  $64$         & $12.7$        & $7.0$          & $14.5$       & $0.1$   & $1.9$      \\
          &               &  $128$        & $8.0$         & $4.6$          & $8.9$        & $0.6$   & $6.1$      \\  \midrule

          &               &  $32$         & $22.0$        & $29.5$         & $14.5$       & $0.7$   & $4.3$    \\
          & $\frac{1}{4}$ &  $64$         & $14.8$        & $17.6$         & $9.0$        & $3.0$   & $10.0$   \\
          &               &  $128$        & $10.5$        & $11.5$         & $6.1$        & $6.1$   & $15.0$   \\  \cmidrule(l){2-8}

          &               &  $32$         & $22.1$        & $17.8$         & $16.1$       & $0.6$   & $4.3$    \\
9-cell    &  $1$          &  $64$         & $14.5$        & $10.5$         & $9.8$        & $4.4$   & $10.0$   \\
          &               &  $128$        & $9.9$         & $6.5$          & $6.7$        & $9.4$   & $14.5$   \\  \cmidrule(l){2-8}

          &               &  $32$         & $16.4$        & $9.4$          & $19.8$       & $0.5$   & $4.4$    \\
          &  $4$          &  $64$         & $10.3$        & $5.8$          & $12.3$       & $3.1$   & $8.9$    \\
          &               &  $128$        & $6.5$         & $3.4$          & $8.3$        & $9.1$   & $12.1$   \\  \bottomrule

\end{tabular}
\end{table}

 \begin{figure}[h]
 \centering

 \subfloat[9-cell scenario]{
         \label{user-gain-cdf-9cell}
         \includegraphics[width=0.6\linewidth]{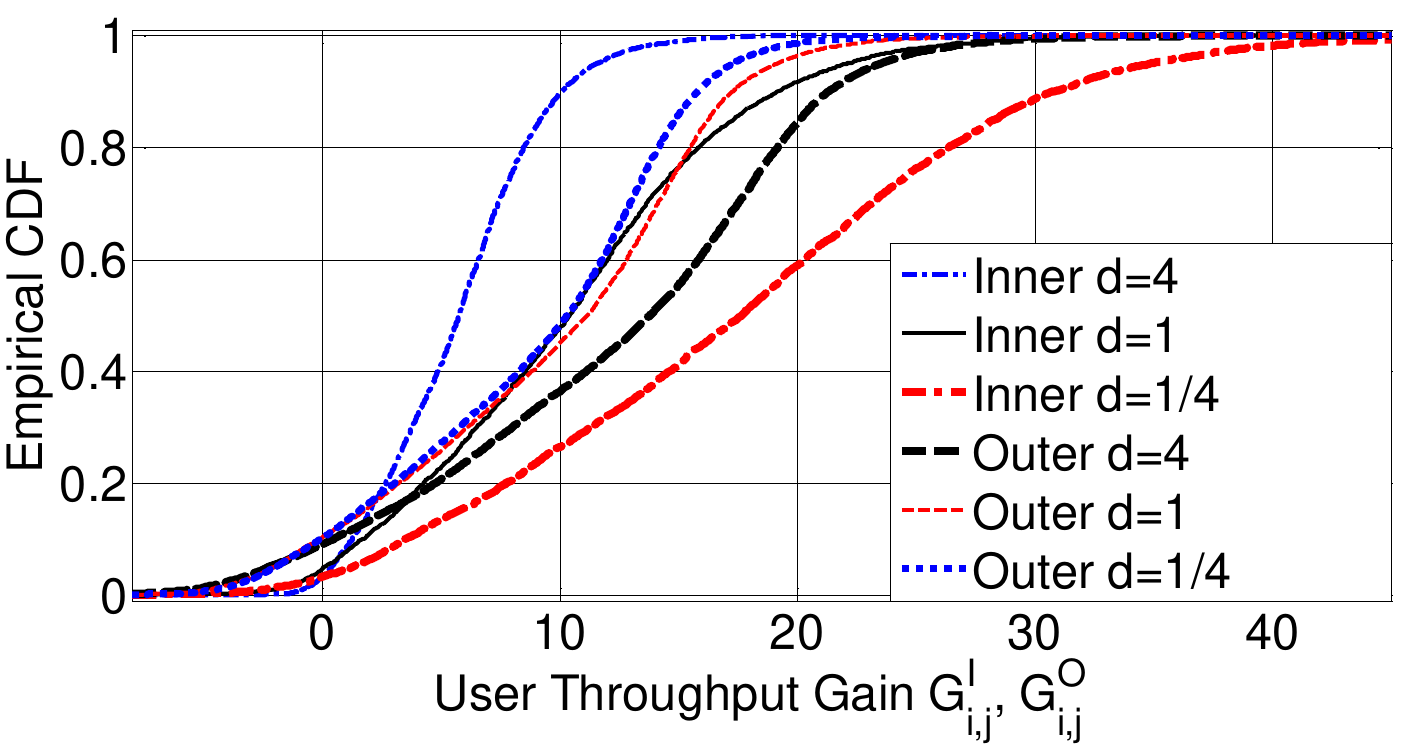}}

 \subfloat[37-cell scenario]{
         \label{user-gain-cdf-37cell}
         \includegraphics[width=0.6\linewidth]{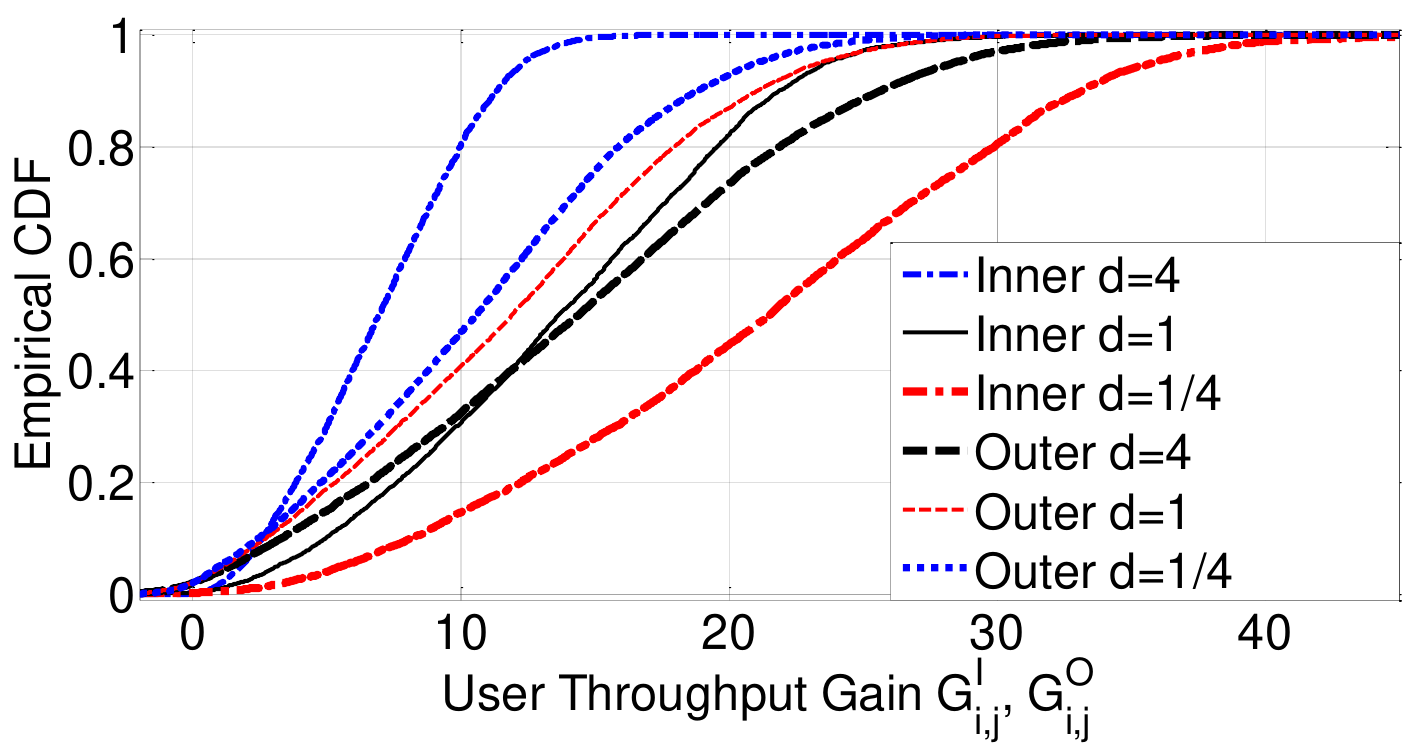} }

 \caption{Empirical CDF of the users throughput gains $G^I_{i,j}$ and $G^O_{i,j}$ for $N=64$ and different values of $d$ in 9-cell and 37-cell networks}
 \label{user-gain-cdf}

 \end{figure}







\subsection{Impact of GPS Selection for the MMTF Formulation}
\label{subsec:number-pattern}
In this section, we study the impact of the choice of the underlying pattern set in the context of the MMTF formulation (\ref{MMTF}). The performance metric is taken as the minimum of the temporal shares of all the users served in the CN ($z$ in (\ref{MMTF})). Recall that MMTF attempts to maximize this quantity through the linear program given in Section \ref{subsec:ISPTF-MMTF} through which  we obtain the per-pattern weights for this numerical example and consequently the performance metric. For this purpose, given the CNs with $K =6$ or $9$ cells depicted in Fig. \ref{three-scenarios}, $10K$ users are spread through the CN uniformly, leading to an average population of $10$ users per cell. After locating $10K$  users in the CN, one section is selected at random and $10P, P\in\{1,2,\ldots,10\}$ users are further introduced in this cell for the purpose of making the user distribution through the network more non-uniform.
To quantify this non-uniformity, we introduce  the parameter $\eta = (1+P)/(K+P)$ which gives the expected number of users in the most crowded cell divided by the overall number of users in the network. The larger the parameter $P$ or $\eta$, the more non-uniform the user distribution becomes.
Subsequently, each MMTF problem is solved 1000 times each of which is obtained by associating $10(K+P)$ users in the CN with the individual cells and their sections. The average of the 1000 instances is then reported.
We first study the 9-cell scenario given in Fig. ~\ref{three-scenarios}a.
We study the following pattern sets in the simulation study (the individual patterns are defined in Table \ref{patterns-42-22}):
\begin{itemize}
\setlength\itemsep{0.1mm}
\item $GPS_1=  \{ P_1,P_2,P_3,P_4,P_5,P_8,P_{11},P_{15},P_{22},P_{29}\}$
\item $GPS_2 = GPS_1 \bigcup \{P_{24},P_{36}\}$
\item $GPS_3 = GPS_2 \bigcup \{ P_6,P_{10},P_{17},P_{19},P_{21}\}$
\item $GPS_4$: pattern set obtained by Algorithm~\ref{pattern-generation-alg} with cardinality 22, introduced in Table \ref{patterns-42-22}.
\item EPS-3
\item UPS introduced by Table \ref{patterns-42-22}.
\end{itemize}
Fig.~\ref{performance-9cell} depicts the performance of the six pattern sets of choice mentioned above in terms of the minimum temporal share as a function of the non-uniformity parameter $\eta$. As $\eta$ increases, the minimum temporal share decreases for all the employed pattern sets. This observation stems from the fact that an increase in $\eta$ implies that one of the sections gets more crowded and consequently the users in that section get less chance to be scheduled.
UPS outperforms all the other pattern sets for all values of $\eta$ whereas the performance of $GPS_4$ is only slightly below that of UPS while the computational load required by Algorithm~\ref{pattern-generation-alg} to construct $GPS_4$ is remarkably lower than that of ES.  EPS-3 outperforms $GPS_3$ when $\eta$ is small but the situation is reversed when non-uniformity increases favoring the choice of $GPS_3$ over EPS-3. Furthermore, we observe that $GPS_3$ outperforms $GPS_2$ which in turn outperforms $GPS_1$. Therefore, we conclude that adding more patterns to an employed pattern set consistently enhances the performance of the MMTF scheduler. Similar conclusions are drawn for the 6-cell scenario for which we comparatively study four pattern sets, namely EPS-3, UPS, $GPS_1=\{P_1,P_2,P_3,P_4,P_5,P_6,P_7,P_8\}$, and $GPS_2 $ being the GPS obtained by Algorithm~\ref{pattern-generation-alg} where the patterns are illustrated in Table~\ref{patterns-13-10}. Fig.~\ref{performance-6cell} illustrates the performance of these four pattern sets. We conclude that pattern selection and/or its cardinality are crucial for MMTF schedulers especially for CNs with larger number of cells.
\begin{figure}[h]
\centering
\includegraphics[width=0.7\linewidth]{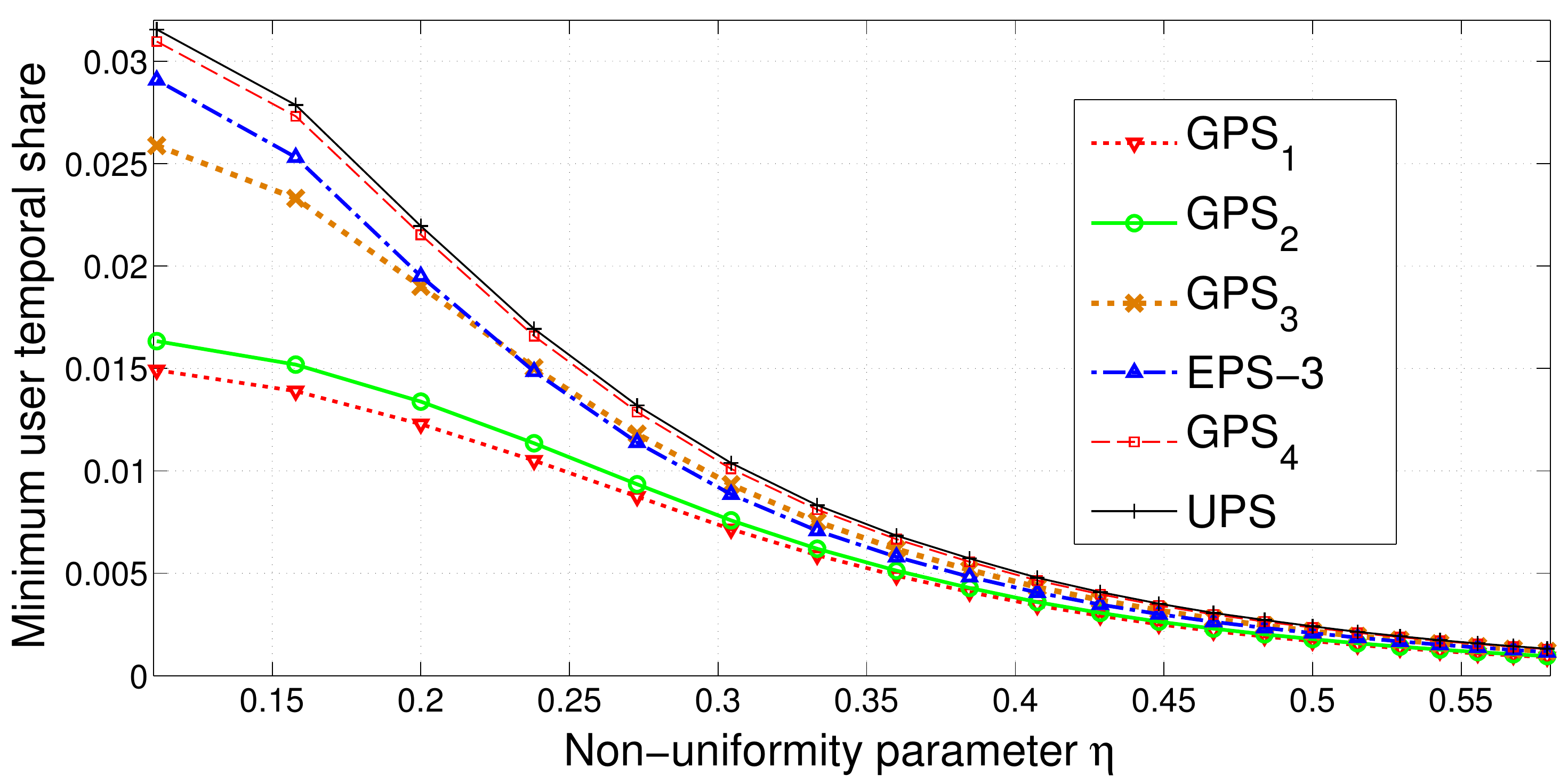}
\caption{Performance of various pattern sets in the 9-cell scenario as a function of the non-uniformity parameter $\eta$.}
\label{performance-9cell}
\end{figure}
\begin{figure}[h]
\centering
\includegraphics[width=0.7\linewidth]{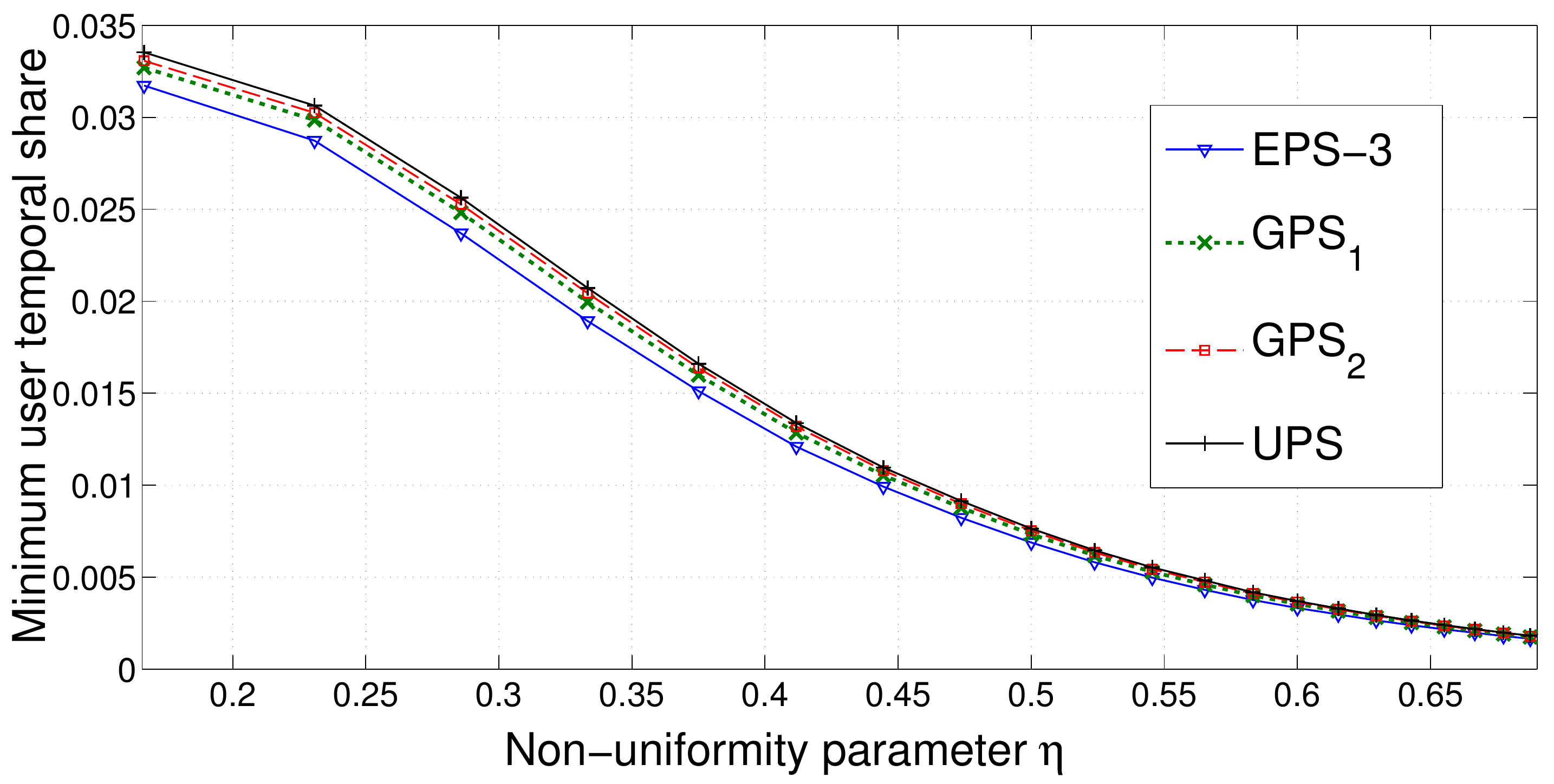}
\caption{Performance of various pattern sets in the 6-cell scenario as a function of the non-uniformity parameter $\eta$.}
\label{performance-6cell}
\end{figure}


\subsection{Performance of MMTF in Non-homogeneously Populated CNs}
In this example, we use Zipf-distributed user populations in the CN. In order to describe the Zipf distribution based on \cite{breslau1999web}, let $K$ be the number of elements,  and $k\in \{ 1,2,\ldots,K \}$ be the rank of an element. Let $s$ denote the parameter characterizing the Zipf distribution.  A given entity is an element of rank $k$ with probability $\frac{c}{k^s}, s \geq 0$ where $c$ is a normalization constant. When $K \rightarrow \infty$, the parameter $s$ should satisfy $s >1$.
When $K$ is finite and $s=0$, we have a discrete uniform distribution. When the parameter $s$ is increased, the population of the individual elements in the entire population becomes more non-uniform.
The Zipf distribution is successfully used in modeling various behaviors in computing and communications \cite{breslau1999web,shi_etal_ton05}.
In this example, we will study the impact of the parameter $s$ if a selected user $n \in \{1,2,\ldots,N \}$ resides in a cell $k\in \{1,2,\ldots,K \}$ according to a Zipf distribution with parameters $K$ and $s$.  We fix $K=9$ and we distribute $N \in \{32,64,128\}$ users in the 9-cell CN according to a Zipf distribution with parameter $s \in\{ 0,0.5,1,1.5,2\}$. The ranks of cells in the 9-cell network are depicted in Fig.~\ref{three-scenarios}a.  The user location within a cell is uniformly random. For each value of $N$, we distribute the users in the CN as described above and subsequently run the two-level scheduler using the MMTF formulation with $GPS_4$ in section \ref{subsec:number-pattern} with 22 patterns. We also run the scheduler using IS-PTF with EPS-3 and $d=1$ as a benchmark. For each pair of $s$ and $N$, we run the experiment 100 times each with a duration of $10^6$ time slots. We obtain the minimum user temporal share and minimum and average user throughput for each instance using MMTF and IS-PTF schedulers and take the averages of these individual values over all the 100 simulated instances for a given pair $(N,s)$.
Table~\ref{MMTF-ISPTF-table} illustrates the minimum user temporal share (denoted by $a_{min}$), the minimum user throughput (denoted by $T_{min}$), and the average user throughput (denoted by $T_{avg}$) for both MMTF and IS-PTF schedulers as a function of $s$ and $N$. In general, we observe that the MMTF scheduler always leads to higher minimum user throughput and temporal share. Also, when $N$ increases, the minimum user throughput and temporal share decrease for both schedulers as expected. Furthermore, when $s$ increases, the performance of the MMTF scheduler gets better relative to the IS-PTF scheduler in terms of $T_{min}$ and $a_{min}$. Therefore, we conclude that the MMTF scheduler becomes more effective in non-homogenously populated cellular networks if the performance metric is to be taken as the minimum user throughput or temporal share. We also observe that while the average user throughput is larger with IS-PTF in most of the cases, the gap between IS-PTF and MMTF is small.

\begin{table}[h]
\centering
\caption{Minimum user temporal share, minimum network-wide user throughput (in kbps) and average network-wide user throughput (in Mbps) using MMTF and IS-PTF schedulers in the 9-cell CN.}
\label{MMTF-ISPTF-table}
\begin{tabular}{llllllll} \toprule
$s$          &  $N$         & $a^{MM}_{min}$ & $a^{IS}_{min}$ & $T^{MM}_{min}$  &  $T^{IS}_{min}$   & $T^{MM}_{avg}$  &  $T^{IS}_{avg}$  \\  \midrule

             &  $32$        & $0.093$      & $0.048$      & $207.1$      & $174.5$  & $3.10$  & $3.02$       \\
$0$          &  $64$        & $0.041$      & $0.028$      & $81.9$       & $71.9$   & $1.92$  & $2.08$       \\
             &  $128$       & $0.022$      & $0.015$      & $35.9$       & $35.5$   & $1.09$  & $1.24$    \\  \midrule

             &  $32$        & $0.066$      & $0.042$      & $184.4$      & $161.6$  & $2.96$  & $2.91$     \\
$0.5$        &  $64$        & $0.035$      & $0.023$      & $81.3$       & $77.1$   & $1.91$  & $1.96$    \\
             &  $128$       & $0.018$      & $0.012$      & $30.2$       & $30.1$   & $1.17$  & $1.22$    \\  \midrule

             &  $32$        & $0.052$      & $0.029$      & $180.5$      & $149.1$  & $2.81$  & $2.73$     \\
$1$          &  $64$        & $0.027$      & $0.015$      & $69.3$       & $55.3$   & $1.77$  & $1.78$    \\
             &  $128$       & $0.013$      & $0.007$      & $27.4$       & $24.1$   & $1.14$  & $1.15$     \\  \midrule

             &  $32$        & $0.043$      & $0.019$      & $131.5$      & $85.9$   & $2.27$  & $2.24$     \\
$1.5$        &  $64$        & $0.022$      & $0.010$      & $54.5$       & $39.2$   & $1.58$  & $1.60$   \\
             &  $128$       & $0.011$      & $0.005$      & $21.2$       & $16.3$   & $0.95$  & $0.97$    \\  \midrule

             &  $32$        & $0.038$      & $0.016$      & $130.4$      & $78.7$   & $2.02$  & $1.95$   \\
$2$          &  $64$        & $0.019$      & $0.007$      & $51.3$       & $31.9$   & $1.27$  & $1.30$   \\
             &  $128$       & $0.009$      & $0.004$      & $19.9$       & $11.7$   & $0.81$  & $0.84$    \\  \bottomrule

\end{tabular}
\end{table}

\subsection{Transient Behavior of the Proposed Scheduler}
In this example, we study the transient behavior of the proposed scheduler. At time $\tau=0$, we distribute $3$ and $9$ users in each inner and outer section of the 9-cell CN, respectively, at uniformly random. Subsequently, we run the proposed scheduler with the per-pattern weights obtained through the MMTF  formulation while employing the GPS constructed by Algorithm~1. At time slot $\tau_1=4\times10^5$, we add $3$ and $9$ more users to $I_5$ and $O_5$, respectively, and update the per-pattern weights using the MMTF formulation. At time slot $\tau_2=8\times10^5$, we remove these users back from the network and update back the per-pattern weights. Fig.~\ref{share-outer-user-cell5} illustrates the sampled cumulative temporal share of a randomly selected user in $O_5$ with sampling rate of $H=0.1$ (one sample per ten slots) as a function of  time whereas
Fig.~\ref{rate-outer-user-cell5} depicts the sampled average throughput of the same user as a function of time. We observe that the proposed scheduler converges quite rapidly with the employed scheduler parameters and the temporal share of this particular user approaches the max-min share as time evolves for any of the three time intervals.
\begin{figure}[h]
\centering



\subfloat[Cumulative temporal share of an outer user in cell 5 wrt time.]{
        \label{share-outer-user-cell5}
         \includegraphics[width=0.7\linewidth]{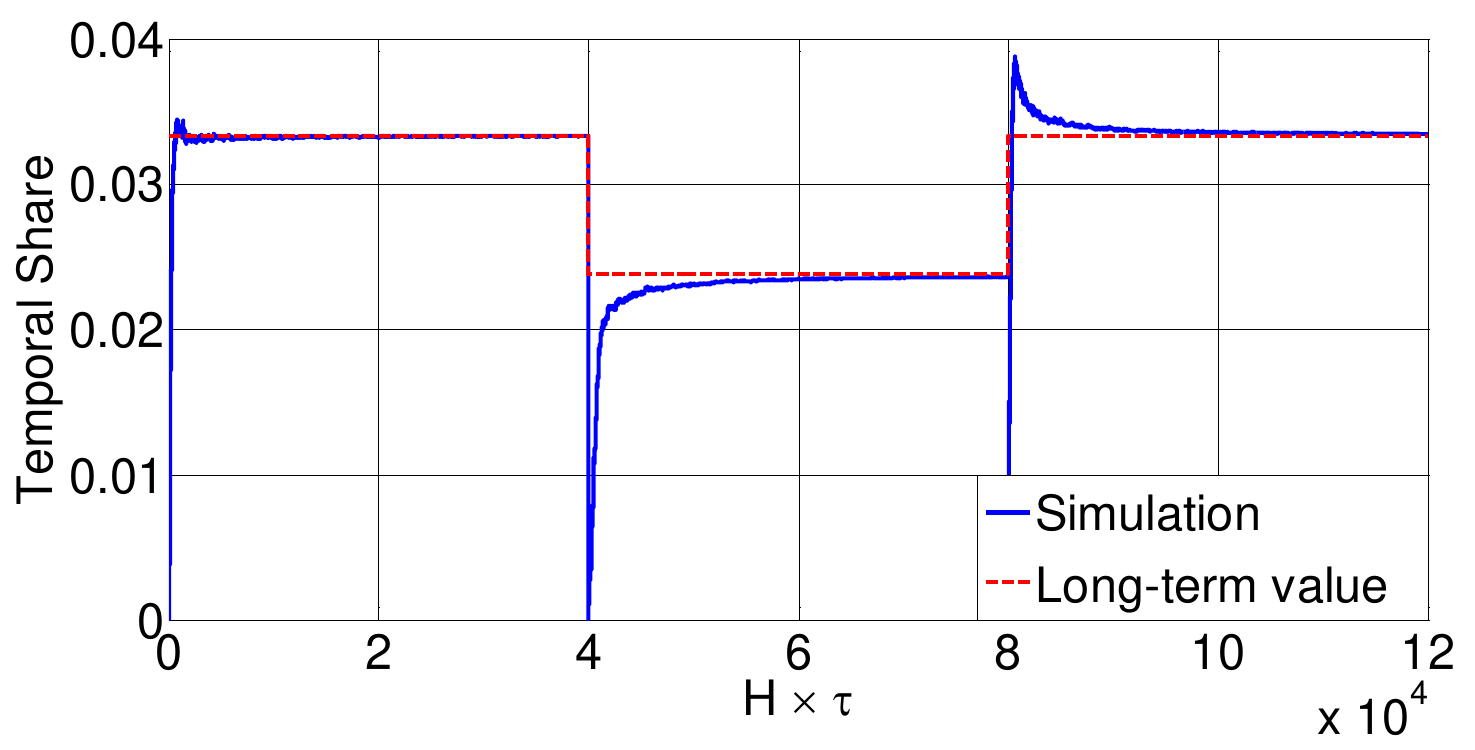}}

\subfloat[Throughput of an outer user in cell 5 wrt time.]{
        \label{rate-outer-user-cell5}
        \includegraphics[width=0.7\linewidth]{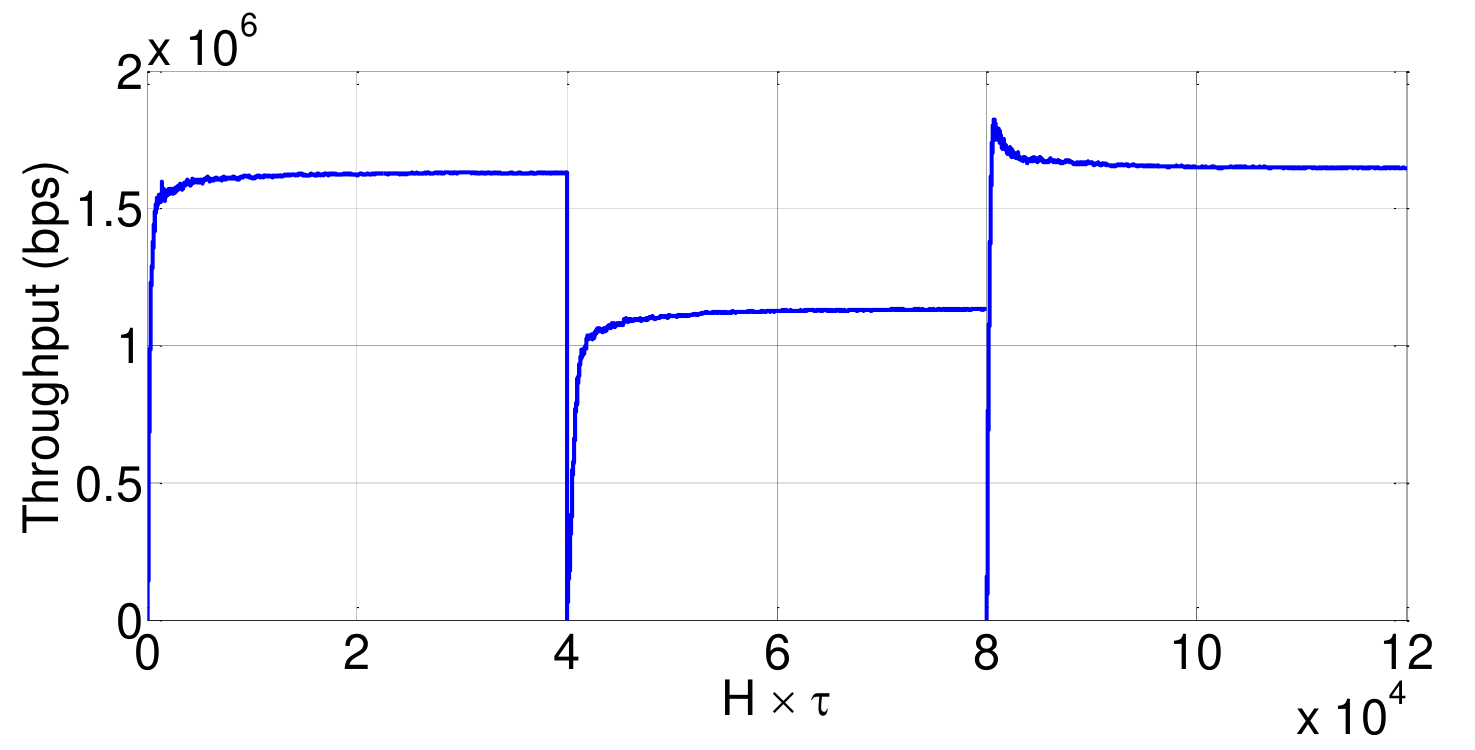} }

\caption{Transient behavior associated with a randomly chosen outer section user in Cell 5 in terms of a) temporal share and b) throughput as a function of time.
}
\label{user-gain-cdf}

\end{figure}

\section{Conclusions}
We have proposed a semi-centralized joint cell muting and user scheduling scheme for interference coordination in a multi-cell network under two temporal fairness criteria, namely IS-PTF and MMTF.  We have also proposed a novel cell muting pattern set construction algorithm  required for this joint scheme. For the IS-PTF formulation, we have shown that the proposed scheme, OFFR, outperforms the benchmark FFR system in in terms of network and users' throughput. On the other hand, the MMTF formulation allows one to perform dynamic load balancing with reasonable computational complexity. It has also been shown that the general pattern set to use and its cardinality also play a major role in the performance of the network-wide worst case user temporal share which is the performance metric we have used in this paper. For various cellular topologies and scenarios, we have shown that the pattern set we use by the proposed construction algorithm performs almost as well as using all possible patterns. Future work will consist of extending the methodology to more realistic OFDMA-based LTE networks and heterogeneous networks as well as incorporation of alternative fairness criteria and dynamic traffic models.

\ifCLASSOPTIONcaptionsoff
  \newpage
\fi

\bibliographystyle{IEEEtran}
\bibliography{fairness}



\end{document}